\newtheorem{theorem}{Theorem}[section]
\newtheorem{algorithm}[theorem]{Algorithm}
\newtheorem{corollary}[theorem]{Corollary}
\newtheorem{definition}[theorem]{Definition}
\newtheorem{example}[theorem]{Example}
\newtheorem{remark}[theorem]{Remark}
\begin{document}


\title{On The Block Decomposition and Spectral Factors of $\lambda$-Matrices}

\author{Belkacem Bekhiti\textsuperscript{1},Abdelhakim Dahimene\textsuperscript{1},Kamel Hariche\textsuperscript{1}  and  George F. Fragulis\textsuperscript{2} }
\affil[]{\textsuperscript{1}Signal and System Laboratory, Electronics and Electrotechnics Institute University of Boumerdes, Algeria, IGEE Ex:(INELEC)\\ %
\affil[]\textsuperscript{2}Western Macedonia Univercity of Applied Sciences, Kozani, Hellas. \\ %
	 
}


\maketitle

\begin{abstract}
In this paper we factorize matrix polynomials into a complete set of spectral factors using a new design algorithm 
and we provide a complete set of block roots (solvents). 
The  procedure is  an extension of the (scalar) Horner method  for the computation of the block roots of matrix polynomials.
The Block-Horner method brings an iterative nature, faster convergence, nested programmable scheme, needless of any prior knowledge 
of the matrix polynomial. 
In order to avoid the initial guess method we proposed a combination of two computational procedures . 
First we start giving  the right Block-\emph{Q. D. (Quotient Difference)} algorithm for spectral decomposition and matrix polynomial factorization.
Then the construction of new block Horner algorithm for extracting the complete set of spectral factors is given.
\end{abstract}

\textbf{Keywords:} Block roots; Solvents; Spectral factors; Block-Q.D. algorithm; Block-Horner's algorithm; Matrix polynomial

 
 \section{Introduction}
 In the early days of control and system theory, frequency domain techniques were the principal tools of analysis, modeling and design for linear systems. Dynamic systems that can be modeled by a scalar $m^{th}$ order linear differential (difference) equation with constant coefficients are amenable to this type of analysis see [\cite{10}] and [\cite{38}]  - other references [\cite{46}].
 In the case of a single input - single output (SISO) system  the transfer function is a ratio of two scalar polynomials. The dynamic properties of the system (time response, stability, etc.) depend on the roots of the denominator of the transfer function or in other words on the solution of the underlying
 homogeneous differential equation (difference equation in discrete-time systems) [\cite{01}],[\cite{40}],[\cite{41}] [\cite{47}],[\cite{48}],[\cite{49}]. 
 The denominator of such systems is a scalar polynomial and its spectral characteristics depend on the location of its roots in the s-plane. Hence the factorization (root finding) of scalar polynomials is an important tool of analysis and design for linear systems[\cite{09}] . 
 In the case of  multi input - multi output (MIMO) systems the dynamics can be modeled by high-degree coupled differential equations or ${l^th}$ degree $m^{th}$ order vector linear differential (difference) equations with matrix constant coefficients.

 In this paper, we treat the dynamic properties of  multivariable systems using the latent roots and /or the spectral factors of the corresponding matrix polynomial, following the research by a number of recent publications see [\cite{11}], [\cite{12}], [\cite{13}], [\cite{14}], [\cite{15}], [\cite{16}] and [\cite{27}] .\vspace{0.2cm}

 The algebraic theory of matrix polynomials has been investigated by Gohberg et al.[\cite{32}], Dennis et al. [\cite{28}], Denman [\cite{30}],[\cite{31}], Shieh et al. [\cite{17}],[\cite{19}], [\cite{21}], [\cite{36}] and Tsai et al.[\cite{20}]. Various computational algorithms [\cite{28}], [\cite{31}], [\cite{32}], [\cite{17}] and [\cite{23}] are available for evaluating the solvents and spectral factors of a matrix polynomial. Recent approaches [\cite{28}],[\cite{29}],[\cite{22}] are the use of the eigenvalues and eigenvectors of the block companion form matrix of the corresponding matrix polynomial (the$\lambda$-matrices) to construct the solvents of that matrix polynomial based in the use of solvents.

 It is often inefficient to explicitly determine the eigenvalues and eigenvectors of a matrix, which can be
 ill conditioned and either non-defective or defective. On the other hand, without prior knowledge of the eigenvalues
 and eigenvectors of the matrix, the Newton-Raphson method [\cite{17}],[\cite{18}] has been successfully utilized for finding the
 solvents, as well as  the block-power method (Tsai et al.) [\cite{20}] for finding the solvents and spectral
 factors of a general nonsingular (monic or comonic)polynomial matrix. 
 
 The matrix polynomial must have distinct block solvents, and the convergence rate of the power method depends strongly on
 the ratio of the two block eigenvalues of largest magnitude [\cite{25}]. There are numerous numerical methods
 for computing the block roots of matrix polynomials without any prior knowledge of the eigenvalues and eigenvectors
 of the matrix polynomial. The most efficient and more stable one that give the 
 complete set of solvent at time is the $Q.D$. (quotient-difference) algorithm. The use of the $Q.D$. algorithm for such purpose has been
 suggested by K. Hariche [\cite{04}].\vspace{0.2cm}
 

 The purpose of this paper is to illustrate the so called Block quotient-difference ($Q.D$.) algorithm
 and  extend the (scalar) Horner method to its block form for the computation of the block roots of matrix polynomial and the determination of complete set of solvents and spectral factors of a 
 monic polynomial, without any prior knowledge of the eigenvalues and eigenvectors of the matrix. See also  Pathan and Collyer [\cite{34}]) where there is a presentation on Horner's method and its application in solving polynomial equations by determining the location of roots. 
 
 The objectives of this paper are described as follows: 
 
 \begin{itemize}
 	\item Illustration and finalization of the Block quotient-difference ($Q.D$.) algorithm for spectral decomposition and matrix polynomial factorization.
 	\item Construction of a new block-Horner array and block-Horner algorithm for extracting the complete set of spectral factors of matrix polynomials.
 	\item Combined the above algorithms for fast convergence, high stability and avoiding initial guess.\vspace{0.2cm}\\
 \end{itemize}

 	\section{Preliminaries}
 	In this section we give some background material.
 	\subsection{Survey on matrix polynomials}
 	
 	\begin{definition} Given the set of  $m\times m$  complex matrices $A_0,A_1,...,A_\emph{l}$, the following matrix valued function of the complex variable $\lambda$  is called a matrix polynomial of degree $l$  and order $m$:
 		\begin{equation}\label{eq.1}
 		A(\lambda)= A_0\lambda^l+A_1\lambda^{l-1}+...+A_{l-1}\lambda+A_l
 		\end{equation}
 	\end{definition}
 	
 	\begin{definition} The matrix polynomial $A(\lambda)$ is called:\\
 		i. Monic if is $A_0$ the identity matrix.\\
 		ii. Comonic if $A_l$ is the identity matrix.\\
 		iii. Regular or nonsingular if $det(A(\lambda))\neq0$ .\\
 		iv. Unimodular if $det(A(\lambda))$ is nonzero constant. 
 	\end{definition}

 	\begin{definition} The complex number $\lambda_{i}$ is called a latent root of the matrix polynomial $A(\lambda)$ if it is a solution of the scalar polynomial equation $\mbox{det}(A(\lambda))=0$ The nontrivial vector  $p$ , solution of  $A(\lambda_{i})p=0_m$
 		is called a primary right latent vector associated with $\lambda_{i}$. Similarly the nontrivial vector $q$ solution of $q^TA(\lambda_{i})=0_m$ is called a primary left latent vector associated with $\lambda_{i}$.
 	\end{definition} 
 	
 	\begin{remark}. If $A(\lambda)$  has a singular leading coefficient $(A_l)$ then $A(\lambda)$ has latent roots at infinity. From the definition we can see that the latent problem of a matrix polynomial is a generalization of the concept of eigenproblem for square matrices. Indeed, we can consider the classical eigenvalues/vector problem as finding the latent root/vector of a linear matrix polynomial  $(\lambda I-A)$ .
 		We can also define the spectrum of a matrix polynomial $A(\lambda)$  as being the set of all its latent roots (notation $\sigma(\lambda)$  ).
 	\end{remark} 
 	
 	\begin{definition} 
 		A right block root also called solvent of $\lambda\mbox{-matrix} A(\lambda)$  and is an $m\times m$ real matrix $R$ such that:
 	\end{definition} 
 	
 	\begin{equation}
 	\begin{split}
 	R^l+A_{1}R^{l-1}+...+A_{l-1}R+A_{l}=O_{m} \hspace*{1cm}\\ \Leftrightarrow A_R(R)=\sum\limits_{i=0}^{l}A_{i}R^{l-i}=O_{m} \hspace*{2cm}
 	\end{split}
 	\end{equation}
 	While a left solvent is an $m\times m$ real matrix $L$ such that:
 	\begin{equation}
 	\begin{split}
 	L^l+L^{l-1}A_{1}+...+LA_{l-1}+A_{l}=O_{m} \hspace*{1cm}\\ \Leftrightarrow A_L(L)=\sum\limits_{i=0}^{l}L^{l-i}A_{i}=O_{m} \hspace*{2cm}
 	\end{split}
 	\end{equation}
 	
 	\begin{remark} From [\cite{26}] we have the following: 
 		\begin{itemize}
 			\item Solvents of a matrix polynomial do not always exist.
 			\item Generalized right (left) eigenvectors of a right (left) solvent are the generalized latent vectors of the corresponding matrix polynomial
 		\end{itemize}
 	\end{remark}
 	
 	\begin{definition}: A matrix $R$ (respectively: $L$) is called a right (respectively: left) solvent of the matrix
 		polynomial if and only if the binomial $(\lambda I-R)$(respectively:$(\lambda I-L$))divides exactly $A(\lambda)$  on the right (respectively: left).
 	\end{definition} 
 	From [\cite{05},\cite{06})],\cite{07})],\cite{08})] we have 
 	
 	\begin{theorem}\label{theorem1} Given a matrix polynomial
 		\begin{equation}
 		A(\lambda)=A_0 \lambda^l+A_1 \lambda^{l-1}+...+A_{l-1}\lambda+A_l
 		\end{equation}
 		a) The reminder of the division of $A(\lambda)$ on the right by the binomial $(\lambda I-X)$ is $A_{R}(X)$ \\
 		b) The reminder of the division of $A(\lambda)$ on the left by the binomial $(\lambda I-X)$ is $A_{L}(X)$ \\
 		Hence there exist matrix polynomials $Q(\lambda)$  and $S(\lambda)$  such that:
 		\begin{equation}
 		\begin{array}{ccc}
 		A(\lambda)&=&Q(\lambda)(\lambda I-X)+A_R (X)\\ [0.1cm]
 		&=&(\lambda I-X)S(\lambda)+A_L (X)
 		\end{array}
 		\end{equation}
 	\end{theorem}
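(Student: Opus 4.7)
The plan is to prove parts (a) and (b) symmetrically, by producing the quotients $Q(\lambda)$ and $S(\lambda)$ explicitly and then verifying the identity by a telescoping argument. The essential mechanism is the noncommutative analogue of the factorization $\lambda^{k} - x^{k} = (\lambda - x)(\lambda^{k-1} + \lambda^{k-2}x + \cdots + x^{k-1})$; since the scalar $\lambda$ commutes with everything and $X$ commutes with itself, the identity
\begin{equation*}
\lambda^{k}I - X^{k} \;=\; \bigl(\lambda^{k-1}I + \lambda^{k-2}X + \cdots + X^{k-1}\bigr)(\lambda I - X) \;=\; (\lambda I - X)\bigl(\lambda^{k-1}I + \lambda^{k-2}X + \cdots + X^{k-1}\bigr)
\end{equation*}
holds in the matrix ring for every $k\geq 1$. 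This is the only algebraic fact I will need.

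For part (a), I would multiply on the left by $A_{i}$ and sum over $i=0,\dots,l-1$ to obtain
\begin{equation*}
\sum_{i=0}^{l}A_{i}\lambda^{l-i} - \sum_{i=0}^{l}A_{i}X^{l-i} \;=\; \Bigl(\sum_{i=0}^{l-1}A_{i}\bigl(\lambda^{l-i-1}I + \lambda^{l-i-2}X + \cdots + X^{l-i-1}\bigr)\Bigr)(\lambda I - X).
\end{equation*}
The left side is exactly $A(\lambda) - A_{R}(X)$ by the definition of $A_{R}$ given before the theorem, so we read off
\begin{equation*}
Q(\lambda) \;=\; \sum_{i=0}^{l-1}A_{i}\bigl(\lambda^{l-i-1}I + \lambda^{l-i-2}X + \cdots + X^{l-i-1}\bigr),
\end{equation*}
which is a matrix polynomial of degree $l-1$, and conclude $A(\lambda) = Q(\lambda)(\lambda I - X) + A_{R}(X)$. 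Part (b) is immediate by the mirror computation: multiply the identity above on the right by $A_{i}$ and sum, which factors $\lambda I - X$ out on the left and yields $S(\lambda) = \sum_{i=0}^{l-1}\bigl(\lambda^{l-i-1}I + \cdots + X^{l-i-1}\bigr)A_{i}$ together with remainder $A_{L}(X)$.

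Uniqueness of remainder (which the statement implicitly uses in saying ``the remainder is $A_{R}(X)$'') follows from a degree argument: if $A(\lambda) = Q_{1}(\lambda)(\lambda I - X) + R_{1} = Q_{2}(\lambda)(\lambda I - X) + R_{2}$ with $R_{1},R_{2}$ constant matrices, then $(Q_{1}-Q_{2})(\lambda)(\lambda I - X) = R_{2}-R_{1}$; comparing leading coefficients of the left side (which is either zero or has degree at least one in $\lambda$, since the leading coefficient $\lambda$ does not cancel against anything of the same degree on the right-hand side) forces $Q_{1}=Q_{2}$ and $R_{1}=R_{2}$. I would mention this briefly.

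The only subtle point, and the one I would flag most carefully in writing out the proof, is the noncommutativity: $A_{i}$ and $X$ need not commute, so one must keep track of the side on which each factor appears throughout. This is exactly why the theorem produces two distinct remainders $A_{R}(X)$ and $A_{L}(X)$, one with the powers of $X$ placed to the right of the coefficients and one to the left; having set things up with the identity above, the bookkeeping is then straightforward.
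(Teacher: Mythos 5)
Your proof is correct, but note that the paper does not actually prove this theorem: it is stated as imported from the references ([Hariche--Denman], [Pereira]), so there is no in-paper argument to match against. The closest the paper comes is (i) the proof of the scalar Horner theorem later on, done by writing $P(x)=(x-\alpha)Q(x)+b_l$ and identifying coefficients to get the recursion $b_k=a_k+b_{k-1}\alpha$, and (ii) the block synthetic-division scheme $B_0=A_0$, $B_k=A_k+B_{k-1}X$, $A_R(X)=A_l+B_{l-1}X$, which is the matrix version of the same coefficient-matching and is really the computational content behind the theorem. Your route via the telescoping identity $\lambda^kI-X^k=(\lambda^{k-1}I+\cdots+X^{k-1})(\lambda I-X)=(\lambda I-X)(\lambda^{k-1}I+\cdots+X^{k-1})$ is the standard textbook argument and is equivalent: if you collect the coefficient of $\lambda^{l-1-k}$ in your closed-form quotient $Q(\lambda)=\sum_{i}A_i(\lambda^{l-i-1}I+\cdots+X^{l-i-1})$ you get $\sum_{i=0}^{k}A_iX^{k-i}$, which is exactly the unrolled Horner recursion $B_k$. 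What your version buys is an explicit, non-recursive quotient and a clean handling of the left/right asymmetry; what the paper's recursive scheme buys is the algorithm it actually implements. Your uniqueness remark is a worthwhile addition the paper omits entirely, and it is valid here precisely because the divisor $\lambda I-X$ is monic (leading coefficient $I$), so the leading coefficient of $(Q_1-Q_2)(\lambda)(\lambda I-X)$ cannot vanish unless $Q_1=Q_2$. The only cosmetic slip is that you announce the sum over $i=0,\dots,l-1$ but display $\sum_{i=0}^{l}$ on the left-hand side; this is harmless since the $i=l$ term contributes $A_l-A_l=O_m$, but it is worth stating.
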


 	\begin{corollary} The fundamental relation that exist between right solvent (respectively: left solvent) and right (respectively: left) linear factor:
 		\begin{equation}
 		\begin{array}{ccc}
 		A_R (X)&=&0 \hspace{0.1cm}\mbox{iff} \hspace{0.1cm}A(\lambda)=Q(\lambda)(\lambda I-X)\\ [0.1cm]
 		A_L (X)&=&0 \hspace{0.1cm}\mbox{iff} \hspace{0.1cm}A(\lambda)=(\lambda I-X)S(\lambda)
 		\end{array}
 		\end{equation}
 	\end{corollary}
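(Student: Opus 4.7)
The plan is to derive both biconditionals directly from Theorem~\ref{theorem1}, since the remainder identities established there do almost all of the work. For the right-solvent case, the forward implication is essentially a substitution: assuming $A_R(X)=O_m$, the decomposition $A(\lambda)=Q(\lambda)(\lambda I-X)+A_R(X)$ collapses to $A(\lambda)=Q(\lambda)(\lambda I-X)$. So the real content lies in the converse, which I would handle by a uniqueness-of-division argument.

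For the converse, I would suppose that $A(\lambda)=\widetilde Q(\lambda)(\lambda I-X)$ for some matrix polynomial $\widetilde Q(\lambda)$, and compare this with the identity supplied by Theorem~\ref{theorem1}. Subtracting gives
\begin{equation*}
\bigl(Q(\lambda)-\widetilde Q(\lambda)\bigr)(\lambda I-X) \;=\; -A_R(X).
\end{equation*}
The right-hand side is a constant (degree $0$) matrix in $\lambda$, whereas $\lambda I-X$ is a monic polynomial of degree $1$ in $\lambda$ (its leading coefficient $I$ is nonsingular and cannot be canceled when multiplied on the right by any nonzero $m\times m$ polynomial). Comparing leading terms therefore forces $Q(\lambda)-\widetilde Q(\lambda)=O_m$ identically, which in turn forces $A_R(X)=O_m$. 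This yields both the uniqueness of the right quotient/remainder pair and the desired equivalence.

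The left-solvent case is handled symmetrically: starting from $A(\lambda)=(\lambda I-X)S(\lambda)+A_L(X)$, the forward implication is again immediate, and the converse follows from the same leading-term comparison, with right multiplication replaced by left multiplication by the monic factor $(\lambda I-X)$.

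The only subtlety — and hence the main (mild) obstacle — is noncommutativity: one has to be sure that right multiplication by $\lambda I-X$ cannot reduce the degree in $\lambda$ of a nonzero matrix polynomial. This is where monicity of $\lambda I-X$ is essential, since the leading coefficient $I$ is invertible and thus preserves degree under either left or right multiplication. Once this point is noted, the corollary is a direct reading of Theorem~\ref{theorem1} together with the uniqueness of one-sided polynomial division by a monic linear factor.
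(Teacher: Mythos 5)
Your proof is correct and follows the route the paper intends: the corollary is stated there without proof as an immediate consequence of Theorem~\ref{theorem1}, and your argument simply makes that derivation explicit (the forward direction by substituting $A_R(X)=O_m$, resp.\ $A_L(X)=O_m$, into the division identity, and the converse by the leading-coefficient comparison that gives uniqueness of the one-sided quotient and remainder for division by the monic factor $\lambda I-X$). Your remark that monicity of $\lambda I-X$ is what prevents degree cancellation in the noncommutative setting is exactly the right point to flag, so there is nothing to add.
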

 	
 	In reference [2] it is stated the following:
 	
 	\begin{theorem}: Consider the set of solvents \{$R_1,R_2,...,R_l$\} constructed from the eigenvalues ($\lambda_1,\lambda_2,...,\lambda_l$) of a matrix $A_c$. \{$R_1,R_2,...,R_l$\}is a complete set of solvents if and only if:
 		\begin{equation}
 		\left\{
 		\begin{array}{ll}
 		\cup \sigma(R_i )=\sigma(A_c )\\
 		\sigma(R_i )\cap \sigma(R_j )=\varnothing\\
 		det(V_R(R_1,R_2,...,R_l))\neq0
 		\end{array}
 		\right.
 		\end{equation}
 		Where:
 		$\sigma$ denotes the spectrum of the matrix.
 		$V_R$ Vandermonde matrix corresponding to \{$R_1,R_2,...,R_l$\} given as
 		\begin{equation}
 		V_R(R_1,R_2,...,R_l)=
 		\left( \begin{matrix}
 		I_m\hspace{0.5cm}I_m \hspace{0.4cm}...\hspace{0.4cm}I_m \\
 		R_1\hspace{0.5cm}R_2\hspace{0.4cm}...\hspace{0.4cm}R_l \\
 		\vdots\hspace{0.6cm}\vdots \hspace{0.7cm}...\hspace{0.4cm}\vdots\\
 		R_1^{l-1}\hspace{0.3cm}R_2^{l-1}\hspace{0.1cm}...\hspace{0.2cm}R_l^{l-1}
 		\end{matrix} \right)
 		\end{equation}
 	\end{theorem}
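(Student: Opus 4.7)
The plan is to base the argument on the similarity relation linking the block companion matrix $A_{c}$ of $A(\lambda)$, the block diagonal matrix $\tilde{R}=\mathrm{blockdiag}(R_{1},\ldots,R_{l})$, and the block Vandermonde $V_{R}$. The identity I would establish first is
\[
A_{c}V_{R}=V_{R}\tilde{R}.
\]
I would verify it one block column at a time: applying $A_{c}$ to $[I,\,R_{i},\,\ldots,R_{i}^{l-1}]^{T}$ produces the column whose last block is $-\sum_{j=1}^{l}A_{j}R_{i}^{l-j}$, which equals $R_{i}^{l}$ by the solvent equation $A_{R}(R_{i})=O_{m}$. The image is therefore the original column post-multiplied by $R_{i}$, and concatenating over $i$ yields the full identity.

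For necessity, I would use that a complete set of solvents is obtained by partitioning the $lm$ eigenvalues of $A_{c}$ into $l$ disjoint groups of size $m$, each being the spectrum of some $R_{i}$. This directly gives conditions (i) and (ii). For condition (iii), the identity $A_{c}V_{R}=V_{R}\tilde{R}$ combined with disjoint spectra forces $V_{R}$ to be a change-of-basis between $A_{c}$ and $\tilde{R}$: any nonzero vector in $\ker V_{R}$ would generate a $\tilde{R}$-invariant subspace whose spectrum lies in some $\sigma(R_{i})$ yet whose image under $V_{R}$ vanishes, contradicting the equality of spectra with matching multiplicities coming from the disjoint construction.

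For sufficiency, the three conditions give $V_{R}$ invertible, hence $A_{c}=V_{R}\tilde{R}V_{R}^{-1}$. Since $A_{R}(R_{i})=O_{m}$ for every $i$, Theorem~\ref{theorem1} together with its corollary yield the right factorization $A(\lambda)=Q_{i}(\lambda)(\lambda I-R_{i})$ for each $R_{i}$; disjointness (ii) then ensures that the latent roots of the successive quotients $Q_{i}(\lambda)$ remain disjoint from the next solvent's spectrum, allowing an iterative extraction so that the entire polynomial is reconstructed from $\{R_{1},\ldots,R_{l}\}$, which is the defining property of a complete set.

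The main obstacle I expect is the necessity of (iii) when the solvents $R_{i}$ possess nontrivial Jordan structure, since a purely spectral count is not quite enough. The delicate step is to show that generalized eigenvectors of each $R_{i}$ lift through the block column $[I;\,R_{i};\,\ldots;R_{i}^{l-1}]$ to generalized eigenvectors of $A_{c}$ within the $\sigma(R_{i})$-component of its generalized eigenspace decomposition; the disjointness condition (ii) then separates the lifted vectors belonging to different components, and a dimension count over $\mathbb{C}^{lm}$ forces $V_{R}$ to be nonsingular.
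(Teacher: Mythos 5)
The paper does not actually prove this statement: it is quoted from reference [2] (``In reference [2] it is stated the following''), so there is no in-paper proof to compare yours against. The closest the paper comes is the unproved diagonalization identity $V_R^{-1}A_RV_R=\mathrm{Blockdiag}(R_1,\dots,R_l)$ in Subsection 2.4, which is precisely your intertwining relation $A_cV_R=V_R\tilde R$; your block-column verification of it is correct. Your kernel argument for condition (iii) is also essentially sound, and in fact proves more than you claim: $\ker V_R$ is $\tilde R$-invariant, so if nonzero it contains an ordinary eigenvector $x$ of $\tilde R$ with eigenvalue $\mu$; pairwise disjointness of the $\sigma(R_i)$ forces $x$ to be supported in a single block ($R_j-\mu I$ is invertible for $j\neq i$), and then $V_Rx=[v;\mu v;\dots;\mu^{l-1}v]\neq0$, a contradiction. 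Thus condition (ii) alone, given that each $R_i$ is a solvent, already implies $\det V_R\neq0$. Your closing worry about Jordan structure is therefore unnecessary: no lifting of generalized eigenvectors or dimension count is needed, because every nontrivial invariant subspace contains a genuine eigenvector.

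The real gap is logical rather than computational: neither you nor the paper states what ``complete set of solvents'' means independently of the three listed conditions. Your necessity direction opens with ``a complete set of solvents is obtained by partitioning the $lm$ eigenvalues of $A_c$ into $l$ disjoint groups,'' which is conditions (i) and (ii) restated, and your sufficiency direction closes with ``which is the defining property of a complete set''---so both implications assume the conclusion under another name. In the solvent literature (Dennis--Traub--Weber, Shieh--Tsay) a complete set is usually \emph{defined} by $\bigcup\sigma(R_i)=\sigma(A_c)$ counted with multiplicity together with $\det V_R\neq0$; with that definition the content of the theorem shifts entirely, and (ii) becomes the condition to be related to completeness rather than a consequence of it. To make the argument non-circular you must fix one definition up front and derive the remaining conditions from it. Separately, the iterative-extraction step in your sufficiency sketch needs repair: after dividing $(\lambda I-R_1)$ out of $A(\lambda)$ via Theorem~\ref{theorem1}, the matrix $R_2$ is \emph{not} in general a solvent of the quotient $Q_1(\lambda)$; one needs the similarity transformations $Q_k=\bigl(N_{(k-1)R}(R_k)\bigr)R_k\bigl(N_{(k-1)R}(R_k)\bigr)^{-1}$ of Subsection 2.2 to convert the remaining solvents into spectral factors of the successive quotients.
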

 	
 	\begin{remark}: we can define a set of left solvents in the same way as in the previous theorem.
 		The relationship between latent roots, latent vectors, and the solvents can be stated as follows:
 	\end{remark}
 	
 	From [\cite{24}] we have the following:
 	
 	\begin{theorem}:If $A(\lambda)$  has $n=ml$ linearly independent right latent vectors $p_{ij}$ ($i=1,\cdots,l$) and ($j=1,\cdots,m$) (left latent vectors $q_{ij}$ $i=1,\cdots,l$ and $j=1,\cdots,m$) corresponding to latent roots $\lambda_{ij}$ then  $P_i \Lambda_i P_i^{-1},(Q_i\Lambda_i Q_i^{-1})$   is a right (left) solvent. \\
 		Where:  $P_i=[p_{i1},p_{i2},...,p_{im}],      (Q_i=[q_{i1},q_{i2},...,q_{im}]^T )$    and $\Lambda_i=diag(\lambda_{i1},\lambda_{i2},\cdots,\lambda_{im})$.\\
 	\end{theorem}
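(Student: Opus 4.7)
The plan is to translate the scalar latent equations $A(\lambda_{ij})p_{ij}=0$ into a single matrix identity by column-packing, and then use a conjugation identity to reinterpret it as the solvent equation.

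First I would write out the hypothesis on each right latent pair: for every $j\in\{1,\dots,m\}$,
\begin{equation*}
A(\lambda_{ij})p_{ij}=\sum_{k=0}^{l}A_{k}\lambda_{ij}^{l-k}\,p_{ij}=0_{m}.
\end{equation*}
The key observation is that the $j$-th column of $P_{i}\Lambda_{i}^{\,l-k}$ is precisely $\lambda_{ij}^{\,l-k}p_{ij}$, because $\Lambda_{i}$ is diagonal. Concatenating the $m$ latent equations column by column therefore yields the single identity
\begin{equation*}
\sum_{k=0}^{l}A_{k}\,P_{i}\,\Lambda_{i}^{\,l-k}=0_{m\times m}.
\end{equation*}

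Next I would invoke the linear independence assumption. Since the full family $\{p_{ij}\}$ contains $n=ml$ linearly independent vectors and $P_{i}$ is built from the subfamily indexed by a fixed $i$, the columns of $P_{i}$ are linearly independent, so $P_{i}\in\mathrm{GL}_{m}(\mathbb{C})$. Right-multiplying the displayed identity by $P_{i}^{-1}$ gives $\sum_{k=0}^{l}A_{k}\,P_{i}\Lambda_{i}^{\,l-k}P_{i}^{-1}=0_{m}$. I would then insert $P_{i}^{-1}P_{i}=I_{m}$ between consecutive factors of $\Lambda_{i}$ to get the telescoping conjugation identity $P_{i}\Lambda_{i}^{\,l-k}P_{i}^{-1}=(P_{i}\Lambda_{i}P_{i}^{-1})^{l-k}=R_{i}^{\,l-k}$. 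Substituting this back, and remembering that the polynomial is monic ($A_{0}=I_{m}$, in line with the solvent definition given after Definition~2.5), produces exactly $A_{R}(R_{i})=\sum_{k=0}^{l}A_{k}R_{i}^{\,l-k}=0_{m}$, which is the defining relation for a right solvent.

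For the left solvent $L_{i}$, the argument is dual. Starting from $q_{ij}^{T}A(\lambda_{ij})=0_{m}^{T}$ and using that the $j$-th row of $\Lambda_{i}^{\,l-k}Q_{i}$ equals $\lambda_{ij}^{\,l-k}q_{ij}^{T}$, one stacks rows to obtain $\sum_{k=0}^{l}\Lambda_{i}^{\,l-k}Q_{i}A_{k}=0_{m}$, then left-multiplies by the inverse of $Q_{i}$ and applies the same conjugation trick to recover $A_{L}(L_{i})=\sum_{k=0}^{l}L_{i}^{\,l-k}A_{k}=0_{m}$ with $L_{i}$ of the stated form.

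I do not anticipate a genuinely hard step: the whole argument is essentially one idea (column-pack the eigen-equations, then conjugate). The only places requiring care are bookkeeping ones: (i) verifying that the column/row indexing lines up so that the block identity reproduces all $m$ scalar latent equations simultaneously, (ii) confirming the conjugation identity $P_{i}\Lambda_{i}^{k}P_{i}^{-1}=(P_{i}\Lambda_{i}P_{i}^{-1})^{k}$ so the polynomial in $\Lambda_{i}$ becomes a polynomial in $R_{i}$, and (iii) noting that local invertibility of $P_{i}$ (and $Q_{i}$) is a consequence of, but strictly weaker than, the global hypothesis of $n=ml$ linearly independent latent vectors. These are the only moving parts, and each is essentially a one-line check.
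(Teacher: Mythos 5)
The paper itself gives no proof of this theorem: it is quoted verbatim from reference [24] (Shieh and Tsay), so there is no in-paper argument to compare yours against. On its own merits, your column-packing argument for the right-solvent half is the standard one and is correct: the $j$-th column of $\sum_{k=0}^{l}A_kP_i\Lambda_i^{l-k}$ is $A(\lambda_{ij})p_{ij}=0$, and conjugating by the nonsingular $P_i$ turns the packed identity into $A_R(P_i\Lambda_iP_i^{-1})=0$. You are also right to treat the independence hypothesis with care: as literally written ($ml$ linearly independent vectors in $\mathbb{C}^m$) it is vacuous for $l>1$; the intended reading, standard in this literature, is that each subfamily $p_{i1},\dots,p_{im}$ is independent (equivalently, that the $ml$ associated eigenvectors of the block companion matrix are independent), and the only thing your proof actually needs is the nonsingularity of each $P_i$.

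The one place your write-up glosses over a real discrepancy is the left-solvent half. From $q_{ij}^TA(\lambda_{ij})=0$ you correctly stack rows to get $\sum_{k=0}^{l}\Lambda_i^{l-k}Q_iA_k=0$ with $Q_i=[q_{i1},\dots,q_{im}]^T$; but left-multiplying by $Q_i^{-1}$ and using $Q_i^{-1}\Lambda_i^{l-k}Q_i=(Q_i^{-1}\Lambda_iQ_i)^{l-k}$ yields $A_L\bigl(Q_i^{-1}\Lambda_iQ_i\bigr)=0$. So the left solvent your algebra actually produces is $Q_i^{-1}\Lambda_iQ_i$, not the $Q_i\Lambda_iQ_i^{-1}$ written in the statement; with the stated definition of $Q_i$ these are different matrices, and $Q_i\Lambda_iQ_i^{-1}$ would require the rows of $Q_i^{-1}$, not of $Q_i$, to be left latent vectors. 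You should therefore not claim to recover ``the stated form'' --- either flag the formula in the statement as a transcription error or redefine $Q_i$ accordingly. That the correct formula is $Q_i^{-1}\Lambda_iQ_i$ is corroborated by the paper's next theorem, where $R_i=W_iL_iW_i^{-1}$ with $W_i=P_iQ_i$ holds for $L_i=Q_i^{-1}\Lambda_iQ_i$ but not for $Q_i\Lambda_iQ_i^{-1}$. This does not affect the substance of your argument, which is otherwise complete.
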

 	
 	\begin{theorem} If $A(\lambda)$  has  $n=ml$  latent roots $\lambda_{i1},\lambda_{i2},\cdots,\lambda_{im}$ and the corresponding right latent vectors $p_{i1},p_{i2},...,p_{im}$ has as well as the left latent vectors $q_{i1},q_{i2},...,q_{im}$ are both linearly independent, then the associated right solvent $R_i$ and left solvent  $L_i$ are related by:$R_i=W_iL_iW_i^{-1}$ Where: $W_i=P_iQ_i$  and  $P_i=[p_{i1},p_{i2},...,p_{im}],     (Q_i=[q_{i1},q_{i2},...,q_{im}]^T$). and $" T "$ stands for transpose 
 	\end{theorem}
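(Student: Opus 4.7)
The plan is to exploit the eigendecompositions of $R_i$ and $L_i$ furnished by the preceding theorem and then verify the similarity relation by direct computation. By hypothesis the columns of $P_i=[p_{i1},\ldots,p_{im}]$ are linearly independent right latent vectors and the rows of $Q_i=[q_{i1},\ldots,q_{im}]^T$ are linearly independent left latent vectors, all associated with the same block of latent roots collected in $\Lambda_i=\mathrm{diag}(\lambda_{i1},\ldots,\lambda_{im})$. This independence makes $P_i$ and $Q_i$ invertible, so $W_i=P_iQ_i$ is invertible as well, which is what is needed for the conjugation $W_iL_iW_i^{-1}$ to make sense.

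First I would record, from the previous theorem, the decomposition $R_i=P_i\Lambda_i P_i^{-1}$: indeed each column $p_{ij}$ is a right eigenvector of $R_i$ with eigenvalue $\lambda_{ij}$, so $R_iP_i=P_i\Lambda_i$. The analogous fact for $L_i$ is the one that actually makes the identity work: since each row $q_{ij}^{T}$ of $Q_i$ is a left eigenvector of $L_i$ with the same eigenvalue $\lambda_{ij}$, one has $Q_iL_i=\Lambda_iQ_i$, which I would rewrite as
\begin{equation}
L_i=Q_i^{-1}\Lambda_iQ_i.
\end{equation}
Establishing this in the exact form above (and noting that it is consistent with the left-solvent decomposition stated earlier, read with the convention that $Q_i$ has left latent vectors as rows) is the one step that requires care.

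Once both decompositions are in hand the proof reduces to a one-line computation:
\begin{equation}
W_iL_iW_i^{-1}=(P_iQ_i)\,(Q_i^{-1}\Lambda_iQ_i)\,(P_iQ_i)^{-1}=P_iQ_iQ_i^{-1}\Lambda_iQ_iQ_i^{-1}P_i^{-1}=P_i\Lambda_iP_i^{-1}=R_i.
\end{equation}
Conceptually, $W_i=P_iQ_i$ is precisely the change of basis that intertwines the right-eigenvector frame of $R_i$ with the left-eigenvector frame of $L_i$, and the identity simply expresses that two matrices with the same spectrum $\sigma(\Lambda_i)$ and compatible diagonalizations are similar.

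The only genuine obstacle I anticipate is notational: the theorem as stated in the paper writes the left solvent as $Q_i\Lambda_iQ_i^{-1}$, whereas the convention $Q_i=[q_{i1},\ldots,q_{im}]^T$ forces the form $Q_i^{-1}\Lambda_iQ_i$ for the identity $R_i=W_iL_iW_i^{-1}$ to hold. I would therefore begin the proof by clarifying this convention (or equivalently redefining $Q_i$ so that $QL=\Lambda Q$), after which the algebraic verification above is immediate.
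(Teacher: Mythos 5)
Your argument is correct, and there is nothing in the paper to compare it against: the paper states this theorem as imported from reference [24] (Shieh--Tsay) and supplies no proof of its own. Your route --- take $R_i=P_i\Lambda_iP_i^{-1}$ and $L_i=Q_i^{-1}\Lambda_iQ_i$ from the preceding theorem and verify $W_iL_iW_i^{-1}=P_iQ_i\,Q_i^{-1}\Lambda_iQ_i\,Q_i^{-1}P_i^{-1}=P_i\Lambda_iP_i^{-1}=R_i$ --- is the natural one, and your handling of the one delicate point is right: with the convention $Q_i=[q_{i1},\ldots,q_{im}]^{T}$ (rows equal to the $q_{ij}^{T}$), a left solvent must satisfy $Q_iL_i=\Lambda_iQ_i$, i.e.\ $L_i=Q_i^{-1}\Lambda_iQ_i$, since $v^{T}L=\lambda v^{T}$ implies $v^{T}A_L(L)=\lambda^{l}v^{T}+\lambda^{l-1}v^{T}A_1+\cdots+v^{T}A_l=v^{T}A(\lambda)=0$; the form $Q_i\Lambda_iQ_i^{-1}$ printed in the paper's preceding theorem is inconsistent with that convention and would break the identity (it would give $P_iQ_i^{2}\Lambda_iQ_i^{-2}P_i^{-1}$). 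One small strengthening you could add: you assert that the rows of $Q_i$ are left eigenvectors of $L_i$, which is true by the construction of $L_i$ in the preceding theorem, but it is worth saying explicitly that you are taking $L_i$ to be the solvent \emph{constructed} from those latent vectors (as the phrase ``the associated left solvent'' intends), rather than an arbitrary left solvent with the same spectrum.
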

 	
 	For analysis and design of large-scale multivariable systems, it is necessary to determine a complete set of solvents of the matrix polynomial. Given the matrix polynomial $A(\lambda)$ if a right solvent $R$ is obtained, the left solvent of $L$ of $A(\lambda)$ associated with $R$ can be determined by using the following [\cite{24}]:
 	\begin{equation}
 	L=Q^{-1}RQ   \hspace{0.5cm}  \mbox{rank}(Q)=m
 	\end{equation}
 	Where $Q$ is the solution of the following linear matrix equation [\cite{24}]:
 	\begin{equation}
 	\sum \limits_{i=0}^{l-1}R^{l-1-i}QB_i=I_m
 	\end{equation}
 	Or in vector form using the Kronecker product we have
 	\begin{equation}
 	Vec(Q)=
 	\left(
 	\begin{array}{c}
 	\sum \limits_{i=0}^{l-1}{B_i}^T\bigotimes
 	\left(
 	\begin{array}{c}
 	R^{l-1-i}
 	\end{array}
 	\right)
 	\end{array}
 	\right)^{-1}
 	Vec(I_m)
 	\end{equation}
 	Where:$\bigotimes$ designates the Kronecker product, and $B_i$ are the matrix coefficients of $B(\lambda)$ with
 	$\lambda I_m-R$ factored out from $A(\lambda)$, i.e.,
 	\begin{equation}
 	B(\lambda)=A(\lambda)\left(
 	\begin{array}{c}
 	\lambda I_m-R
 	\end{array}
 	\right)^{-1}=\sum \limits_{i=0}^{l-1}B_i\lambda^{l-1-i}
 	\end{equation}
 	\begin{equation}
 	B(\lambda)=B_0\lambda^{l-1}+B_1\lambda^{l-2}+...+B_{l-1}
 	\end{equation}
 	We can compute the coefficients $B_i$, using the algorithm of synthetic division:
 	\vspace{-.6cm}
 	\begin{table}[H]
 		\hspace*{0.5cm}
 		\begin{center}
 			{\begin{tabular}{cc}
 					\hspace{1.5cm} $B_0=I_m$  & \hspace{0.5cm}\\
 					\hspace{1.5cm} $B_1=B_0A_1+B_0R$ & \hspace{0.5cm}\\
 					\hspace{1.5cm} $\cdots$ & \hspace{0.5cm}\\
 					\hspace{1.5cm} $B_k=B_0A_k+B_{k-1}R$ & \hspace{0.2cm}$k=1,2,...,l-1$\\
 					\hspace{1.5cm} $O_m=B_0A_l+B_{l-1}R$ & \hspace{0.5cm}
 			\end{tabular}}
 		\end{center}
 	\end{table}
 	\vspace{-.6cm}
 	\hspace{-.4cm}
 	
 	\begin{theorem}If the elementary divisors of $A(\lambda)$  are linear, then $A(\lambda)$  can be factored into the product of  $l$-linear monic $\lambda$-matrices called a complete set of spectral factors.
 		\begin{equation}
 		A(\lambda)=(\lambda I_m-Q_l )(\lambda I_m-Q_{l-1} )...(\lambda I_m-Q_1 )
 		\end{equation}
 		where: $(\lambda I_m-Q_i ),i=1...l$   are referred to as a complete set of linear spectral factors. The  $m\times m$ complex matrices $Q_i,i=1...l$ are called the spectral factors of the $\lambda$-matrix $A(\lambda)$.
 	\end{theorem}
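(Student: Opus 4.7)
The plan is to argue by induction on the degree $l$ of $A(\lambda)$. The base case $l=1$ is immediate: $A(\lambda)=\lambda I_m+A_1$, so the trivial choice $Q_1=-A_1$ already gives a single linear spectral factor.

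For the inductive step, I would first exploit the hypothesis that the elementary divisors of $A(\lambda)$ are linear. This forces the block companion matrix $A_c$ to be diagonalizable over $\mathbb{C}$, so $A(\lambda)$ admits $n=ml$ linearly independent right latent vectors $p_{ij}$ associated with latent roots $\lambda_{ij}$. Appealing to the theorem that constructs a right solvent as $P_i\Lambda_i P_i^{-1}$ from any $m$ linearly independent right latent vectors, I can pick a suitable grouping of the latent roots into an $m$-element subset and form a right solvent $Q_1$. The Corollary to Theorem~\ref{theorem1} (the factor/remainder correspondence $A_R(X)=0\Leftrightarrow A(\lambda)=Q(\lambda)(\lambda I-X)$) then yields a monic quotient polynomial $B(\lambda)$ of degree $l-1$ with
\begin{equation*}
A(\lambda)=B(\lambda)(\lambda I_m-Q_1).
\end{equation*}

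To close the induction I must verify that $B(\lambda)$ itself satisfies the same hypothesis, i.e., its elementary divisors are again linear. For this I would use the spectral bookkeeping implicit in the completeness theorem: the latent roots of $B(\lambda)$ are exactly $\sigma(A_c)\setminus\sigma(Q_1)$, and the corresponding latent vectors are inherited from the diagonalization of $A_c$ (restricted to the complementary invariant subspace). One therefore recovers $m(l-1)$ linearly independent latent vectors for $B(\lambda)$, and its block companion matrix is likewise diagonalizable, so the inductive hypothesis applies and furnishes $B(\lambda)=(\lambda I_m-Q_l)\cdots(\lambda I_m-Q_2)$. Combining with the factorization above gives the claimed product.

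The main obstacle, and the point where the proof requires some care, is the choice of the partition of the $ml$ latent roots into $l$ groups of $m$ so that at every stage of the induction one obtains (i) a genuine right solvent (latent vectors in the chosen block are linearly independent), and (ii) a quotient whose remaining latent vectors are still linearly independent. Equivalently, one must pick the partition so that the Vandermonde matrix $V_R(Q_1,\ldots,Q_l)$ is nonsingular. I would argue that under the linear-elementary-divisors hypothesis such a partition always exists: because the $ml$ latent vectors span $\mathbb{C}^{ml}$, a straightforward greedy/dimension-counting argument (at each step pick $m$ latent vectors independent of those already chosen, which is possible because we have strictly more than $m(l-i)$ independent vectors left after $i-1$ extractions) produces the required grouping. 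Once this combinatorial step is handled, everything else reduces to iterating the factor theorem.
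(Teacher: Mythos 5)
The paper states this theorem as background (attributed to Shieh--Tsay and the matrix-polynomial literature) and gives no proof of its own, so there is nothing to compare against; judged on its merits, your outline has the right skeleton (peel off one right linear factor built from $m$ independent latent vectors, then induct on the quotient) but the two load-bearing steps are not actually established. First, your dimension bookkeeping is off: the latent vectors $p_{ij}$ live in $\mathbb{C}^m$, not $\mathbb{C}^{ml}$ (it is the companion eigenvectors $\mathrm{col}(p,\lambda p,\dots,\lambda^{l-1}p)$ that span $\mathbb{C}^{ml}$), so the phrase ``pick $m$ latent vectors independent of those already chosen'' is vacuous --- one cannot have more than $m$ independent vectors in $\mathbb{C}^m$. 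What you need is a partition of the $ml$ latent vectors into $l$ groups each of which is \emph{internally} a basis of $\mathbb{C}^m$, and a greedy choice can genuinely fail (e.g.\ with $m=2$, $l=2$ and latent vectors $e_1,e_1,e_2,e_1+e_2$, greedily taking $\{e_2,e_1+e_2\}$ first leaves the dependent pair $\{e_1,e_1\}$). The existence of such a partition under the hypothesis $|S|\le l\cdot\mathrm{rank}(S)$ (which does follow from independence of the companion eigenvectors) is a matroid base-covering statement, not a ``straightforward'' count.

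Second, the inductive hypothesis for the quotient is asserted rather than proved. The latent vectors of $B(\lambda)$, where $A(\lambda)=B(\lambda)(\lambda I_m-Q_1)$, are not the $p_{ij}$ themselves but their images $(\lambda_{ij}I_m-Q_1)p_{ij}$, and when a remaining latent root also lies in $\sigma(Q_1)$ this map is singular, so ``inherited and still independent'' is exactly the point in doubt. The clean repair is to prove that $B$ again has linear elementary divisors by a kernel count: for each $\lambda_0$, Sylvester's inequality gives $\dim\ker A(\lambda_0)\le\dim\ker(\lambda_0I_m-Q_1)+\dim\ker B(\lambda_0)$, and since $Q_1=P_1\Lambda_1P_1^{-1}$ is diagonalizable and $\det A=\det B\cdot\det(\lambda I_m-Q_1)$, the geometric multiplicity of $\lambda_0$ for $B$ is squeezed up to its algebraic multiplicity. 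With that lemma in hand you do not need any global partition at all: at each stage you only need \emph{some} $m$ internally independent latent vectors of the current quotient, which exist because its latent vectors span $\mathbb{C}^m$ (otherwise its companion eigenvectors would sit in a proper subspace). I would drop the upfront-partition/Vandermonde discussion entirely and replace it with this one-factor-at-a-time argument; as written, the proposal identifies the right obstacle but the proposed resolution does not close it.
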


 	The most right spectral factor $Q_1$ is a right solvent of $A(\lambda)$  and the most left spectral factor $Q_l$ is a left solvent of $A(\lambda)$, whereas the spectral factors may or may not be solvents of $A(\lambda)$. The relationship between solvents and spectral factors are studied by Shieh and Tsay in reference [24].

 	\subsection{Transformation of solvents to spectral factors}
 	
 	The diagonal forms of a complete set of solvents and those of a complete set of spectral factors are identical and are related by similarity transformation.
 	
 	\begin{theorem}:[\cite{19}] Consider a complete set of right solvents $\{R_1,R_2,...,R_l \}$ of monic$\lambda$-matrix  $A(\lambda)$ , then $A(\lambda)$ can be factored as:
 		\begin{equation*}
 		A(\lambda)=N_l(\lambda)=(\lambda I_m-Q_l )(\lambda I_m-Q_{l-1} )...(\lambda I_m-Q_1 )
 		\end{equation*}
 		By using the following recursive scheme: (for $k=1,...,l$)
 		\begin{equation}
 		Q_k=\left(
 		\begin{array}{c}
 		N_{(k-1)R}(R_k)
 		\end{array}
 		\right)
 		R_k
 		\left(
 		\begin{array}{c}
 		N_{(k-1)R}(R_k)
 		\end{array}
 		\right)^{-1}
 		\end{equation}
 		where:
 		\begin{equation}
 		N_{k}(\lambda)=(\lambda I_m - Q_k)N_{k-1}(\lambda) \hspace{1cm} k=1,...,l
 		\end{equation}
 		and for any $j$ we write
 		\begin{equation*}
 		N_{kR}(R_j)=N_{(k-1)R}(R_j)R_j-Q_kN_{(k-1)R}(R_j),\hspace{0.3cm} k=1,...,l
 		\end{equation*}
 		with:\\ 
 		$ N_{0}(\lambda)=I_m \hspace{.2cm} N_{0R}(R_j)=I_m \hspace{.2cm} \mbox{for any} \hspace{.2cm} j$ and \\
 		$\mbox{rank}(N_{(k-1)R}(R_k))=m, \hspace{.1cm} k=1,...,l$
 	\end{theorem}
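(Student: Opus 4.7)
The plan is to prove the identity $A(\lambda) = N_l(\lambda)$ by induction on $k$, carrying along the stronger invariant that \emph{each $R_j$ with $j \le k$ is a right solvent of the partial product $N_k(\lambda)$}. At the terminal stage $k = l$, this invariant says that $N_l(\lambda)$ is a monic $\lambda$-matrix of degree $l$ annihilated on the right by every member of the complete set $\{R_1,\ldots,R_l\}$; uniqueness of such a polynomial (from the Vandermonde nondegeneracy condition) then forces $N_l(\lambda) = A(\lambda)$.

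First, I would record the algebraic mechanism behind the update formula for right evaluation. For any monic matrix polynomial $N(\lambda) = \sum_{i=0}^{d} N_i \lambda^{d-i}$, a direct coefficient comparison gives
\begin{equation*}
(\lambda I_m - Q) N(\lambda) = \lambda\, N(\lambda) - Q\, N(\lambda),
\end{equation*}
and evaluating on the right at $X$ produces
\begin{equation*}
\bigl[(\lambda I_m - Q) N(\lambda)\bigr]_R(X) \;=\; N_R(X)\, X \;-\; Q\, N_R(X),
\end{equation*}
which is precisely the recursion $N_{kR}(R_j) = N_{(k-1)R}(R_j) R_j - Q_k\, N_{(k-1)R}(R_j)$ stated in the theorem. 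This identity is the engine of the induction: it explains both how the choice of $Q_k$ is forced and why earlier solvents are preserved.

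For the inductive step, assume $R_1,\ldots,R_{k-1}$ are right solvents of $N_{k-1}(\lambda)$, i.e. $N_{(k-1)R}(R_j) = 0$ for $j < k$. Substituting this into the identity above immediately yields $N_{kR}(R_j) = 0$ for $j < k$, regardless of the choice of $Q_k$, so the previously installed solvents persist. To install $R_k$ as a solvent of $N_k(\lambda)$ we must enforce
\begin{equation*}
N_{(k-1)R}(R_k)\, R_k \;-\; Q_k\, N_{(k-1)R}(R_k) \;=\; 0_m,
\end{equation*}
and under the standing rank assumption $\operatorname{rank}\bigl(N_{(k-1)R}(R_k)\bigr) = m$, this is solved uniquely by $Q_k = N_{(k-1)R}(R_k) R_k\, N_{(k-1)R}(R_k)^{-1}$. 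The base case $k=1$ reduces to $Q_1 = R_1$, which makes $N_1(\lambda) = \lambda I_m - R_1$ trivially satisfy the invariant.

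At $k = l$ the invariant gives a monic polynomial $N_l(\lambda)$ of degree $l$ with $N_{lR}(R_j) = 0$ for every $j$; the same holds for $A(\lambda)$ by hypothesis. Writing the annihilation conditions simultaneously for both polynomials and stacking, the unknown coefficients are determined by the linear system whose matrix is the block Vandermonde $V_R(R_1,\ldots,R_l)$ appearing earlier in the paper. Since $\{R_1,\ldots,R_l\}$ is a \emph{complete} set of solvents, $\det V_R \neq 0$, and the coefficients coincide termwise: $N_l(\lambda) = A(\lambda)$. The main obstacle is precisely this last step, because one must verify that the right-annihilation conditions $A_R(R_j)=0$, one for each $j$, really do pin down all $l$ matrix coefficients uniquely; this is where the completeness hypothesis on the Vandermonde is indispensable and where the proof would be most delicate to write out in full detail.
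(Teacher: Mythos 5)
The paper does not actually prove this theorem: it is quoted as background from Shieh--Tsay--Coleman [19] with no argument supplied, so there is no "paper proof" to compare against. Judged on its own merits, your proposal is correct and essentially complete. The key identity $\bigl[(\lambda I_m - Q)N(\lambda)\bigr]_R(X) = N_R(X)X - QN_R(X)$ is verified by the coefficient computation you indicate, the induction correctly shows that earlier solvents persist (their right-functional values are already zero, so the update kills them for any $Q_k$) while the similarity $Q_k = N_{(k-1)R}(R_k)R_k N_{(k-1)R}(R_k)^{-1}$ is exactly what is forced by the rank-$m$ hypothesis to install $R_k$. Your final step is also sound and not as delicate as you fear: the difference $A(\lambda)-N_l(\lambda)$ has degree at most $l-1$, so its $l$ matrix coefficients satisfy a homogeneous linear system whose coefficient matrix is the block Vandermonde $V_R(R_1,\ldots,R_l)$, and nonsingularity of $V_R$ (the completeness hypothesis) forces all of them to vanish. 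The one point worth making explicit in a full write-up is that the rank condition $\operatorname{rank}(N_{(k-1)R}(R_k))=m$ is taken as a hypothesis of the theorem rather than derived from completeness; your proof correctly uses it as such.
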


 	Similarly the spectral factors can be obtained from the known $L_i$ of $A(\lambda)$ as follow: (for $k=1,...,l$)
 	\begin{equation}
 	Q_k=Q_{l+1-k}
 	\end{equation}
 	\begin{equation}
 	Q_k=\left(
 	\begin{array}{c}
 	M_{(k-1)L}(L_k)
 	\end{array}
 	\right)^{-1}
 	L_k
 	\left(
 	\begin{array}{c}
 	M_{(k-1)L}(L_k)
 	\end{array}
 	\right) 
 	\end{equation}
 	where:
 	\begin{equation}
 	M_{k}(\lambda)=M_{k-1}(\lambda)(\lambda I_m - Q_k) \hspace{1cm} k=1,...,l
 	\end{equation}
 	and for any $j$ we write
 	\begin{equation*}
 	M_{kL}(L_j)=L_j M_{(k-1)L}(L_j)-M_{(k-1)L}(L_j)Q_k, \hspace{.3cm} k=1,...,l
 	\end{equation*}
 	with:\\
 	$M_{0}(\lambda)=I_m \hspace{.5cm} M_{0L}(L_j)=I_m \hspace{.5cm} \mbox{for any} \hspace{.5cm} j $\\
 	$\hspace{.5cm} \mbox{rank}(M_{(k-1)L}(L_k))=m \hspace{.5cm} \hspace{.8cm} k=1,...,l $\\ [0.2cm]
 	$M_{(k-1)L}(L_i)$ is a left matrix polynomial of $M_{(k-1)}(\lambda)$ having $\lambda$ replaced by a left solvent $L_j$
 	the spectral factorization of $A(\lambda)$ becomes:
 	\begin{equation*}
 	A(\lambda)=M_l(\lambda)=(\lambda I_m-Q_1 )(\lambda I_m-Q_{2} )...(\lambda I_m-Q_l ) \vspace{0.2cm}
 	\end{equation*}

 	\subsection{Transformation of spectral factors to solvents}
 	Given a complete set of spectral factors of a $\lambda$-matrix $A(\lambda)$, then a corresponding complete set of right (left) solvents can be obtained. The transformation of spectral factors to right (left) solvents of a $\lambda$-matrix can be derived as follow [\cite{19}]:
 	
 	\begin{theorem}\label{theorem7} Given a monic $\lambda$-matrix with all elementary devisors being linear
 		\begin{equation*}
 		A(\lambda)=(\lambda I_m-Q_l )(\lambda I_m-Q_{l-1} )...(\lambda I_m-Q_1 )
 		\end{equation*}
 		where $Q_i$ ($\triangleq Q_{l+1-i}$) $i=1,...,l$ are a complete set of spectral factors of a $\lambda$-matrix $A(\lambda)$, and $Q_{i}\bigcap Q_{j}=\varnothing$\\ [.2cm]
 	\end{theorem}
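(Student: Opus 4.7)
The plan is induction on $k$, running the recursive scheme of the preceding theorem in reverse. The base case is immediate: since $A(\lambda) = \bigl[(\lambda I_m - Q_l)\cdots(\lambda I_m - Q_2)\bigr](\lambda I_m - Q_1)$, the Corollary on right linear factors forces $A_R(Q_1) = O_m$, so $R_1 := Q_1$ is already a right solvent of $A(\lambda)$.

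For the inductive step, suppose right solvents $R_1,\dots,R_{k-1}$ have been produced and form the known polynomial $N_{k-1}(\lambda) = (\lambda I_m - Q_{k-1})\cdots(\lambda I_m - Q_1)$ exactly as in the preceding theorem. The task is to produce an $R_k$ satisfying the implicit similarity
\[
Q_k = N_{(k-1)R}(R_k)\, R_k\, N_{(k-1)R}(R_k)^{-1}.
\]
I would construct $R_k$ by first diagonalizing $Q_k = X_k \Lambda_k X_k^{-1}$, which is legitimate because every elementary divisor of $A(\lambda)$ is linear and hence each spectral factor is diagonalizable. Writing $R_k = T_k \Lambda_k T_k^{-1}$ for an unknown nonsingular $T_k$, the equation collapses to $N_{(k-1)R}(T_k \Lambda_k T_k^{-1})\, T_k = X_k$, which on each column $j$ reduces to a scalar identity of the form $n_{k-1}(\lambda_{k,j})\, t_j = x_j$, where $n_{k-1}(\lambda) = \mbox{det}\, N_{k-1}(\lambda)$ and $\lambda_{k,j}$ is the $j$-th eigenvalue of $Q_k$.

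The main obstacle is to show this fixed-point equation has a unique nonsingular solution. I would handle it with the disjointness hypothesis $\sigma(Q_i)\cap\sigma(Q_j)=\varnothing$: the zeros of $n_{k-1}(\lambda)$ lie precisely in $\bigcup_{i<k}\sigma(Q_i)$, which is disjoint from $\sigma(Q_k) = \{\lambda_{k,1},\dots,\lambda_{k,m}\}$, so $n_{k-1}(\lambda_{k,j}) \neq 0$ for every $j$, and each column equation admits a unique nonzero solution. This yields an invertible $T_k$ and simultaneously certifies the rank condition $\mbox{rank}(N_{(k-1)R}(R_k)) = m$ required by the preceding theorem.

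Finally, I would verify $A_R(R_k) = O_m$ by substituting into the factorization $A(\lambda) = M_{l-k}(\lambda)(\lambda I_m - Q_k) N_{k-1}(\lambda)$ at $\lambda = R_k$: the commutation relation $Q_k N_{(k-1)R}(R_k) = N_{(k-1)R}(R_k) R_k$ established above makes the middle factor annihilate when evaluated from the right, confirming that $R_k$ is a right solvent and closing the induction. The dual statement for left solvents follows by the symmetric recursion using $M_{k-1}(\lambda)$ from subsection~2.2.
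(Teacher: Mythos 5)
Your reduction of the intertwining equation to columns contains an error that conceals the real difficulty, and the step it was meant to justify is in fact the unprovable part. Writing $R_k=T_k\Lambda_kT_k^{-1}$, the $j$-th column of $N_{(k-1)R}(R_k)\,T_k$ is $N_{k-1}(\lambda_{k,j})\,t_j$, where $N_{k-1}(\lambda_{k,j})$ is the full $m\times m$ matrix obtained by evaluating $N_{k-1}(\lambda)$ at the scalar $\lambda_{k,j}$ --- it is \emph{not} the scalar $\det N_{k-1}(\lambda_{k,j})$ times $t_j$. So each column equation is a linear system $N_{k-1}(\lambda_{k,j})t_j=x_j$. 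Disjointness of the spectra does give $\det N_{k-1}(\lambda_{k,j})\neq 0$, hence a unique nonzero $t_j$ for each $j$; but since the columns $x_j$ are hit by \emph{different} matrices $N_{k-1}(\lambda_{k,j})^{-1}$, nothing forces $T_k=[t_1,\dots,t_m]$ to be nonsingular. Your claim that disjointness ``simultaneously certifies the rank condition'' is exactly what fails. Concretely, take $m=l=2$, $Q_1=\bigl(\begin{smallmatrix}0&-2\\ 1&2\end{smallmatrix}\bigr)$ with $\sigma(Q_1)=\{1\pm i\}$, $Q_2=\mathrm{diag}(1,2)$, and $A(\lambda)=(\lambda I-Q_2)(\lambda I-Q_1)$; all latent roots are distinct (so all elementary divisors are linear) and the spectra are disjoint, yet $t_1=(I-Q_1)^{-1}e_1=(-1,1)^T$ and $t_2=(2I-Q_1)^{-1}e_2=(-1,1)^T$ coincide. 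The latent vectors of $A(\lambda)$ at $\lambda=1$ and $\lambda=2$ are parallel, so no solvent with spectrum $\{1,2\}$ exists and your induction cannot close. The rank conditions $\mathrm{rank}(P_i)=m$ and $\mathrm{rank}(G_{N_i}(Q_i))=m^2$ that the paper carries along are therefore genuine additional hypotheses, not consequences of $\sigma(Q_i)\cap\sigma(Q_j)=\varnothing$; they must be assumed (or verified for the data at hand), not derived.

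Beyond this, your route differs from the paper's. The paper does not diagonalize $Q_i$: it peels factors off from the left ($N_i(\lambda)=(\lambda I-Q_i)^{-1}N_{i-1}(\lambda)$ with $N_0=A$) and obtains the similarity $P_i$ directly as the solution of the Kronecker system $G_{N_i}(Q_i)\,\mathrm{Vec}(P_i)=\mathrm{Vec}(I_m)$, i.e.\ the normalized intertwining equation $\sum_j A_{ji}P_iQ_i^{\,l-i-j}=I_m$, which makes sense even when $Q_i$ is not diagonalizable or has repeated eigenvalues (a case your columnwise argument also does not cover, since the commutant of $\Lambda_k$ is then larger than the diagonal matrices). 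Your final verification --- that the relation $Q_kN_{(k-1)R}(R_k)=N_{(k-1)R}(R_k)R_k$ forces $A_R(R_k)=O_m$ via right division --- is sound once the intertwiner is known to be invertible, and your base case is correct; the missing ingredient is precisely the nonsingularity of the intertwiner, which neither your argument nor the disjointness hypothesis supplies.
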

 	
 	Now let us define $\lambda$-matrices $N_i(\lambda)$ $i=1,...,l$ as follow:
 	\begin{equation}
 	N_i(\lambda)=(\lambda I-Q_i)^{-1}N_{i-1}(\lambda)
 	\end{equation}
 	\begin{equation}
 	N_i(\lambda)=I_m\lambda^{l-i} + A_{1i}\lambda^{l-i-1}+...+A_{(l-i-1)i}\lambda+A_{(l-i)i} 
 	\end{equation}
 	with $N_0=A(\lambda)$ then the transformation matrix $P_i$ which transforms the spectral factor $Q_i$
 	($\triangleq Q_{l+1-i}$) to the right solvent $R_i$ ($\triangleq R_{l+1-i}$)of $A(\lambda)$ can be
 	constructed from the new algorithm as follow: $\hspace{.3cm} (\mbox{rank}(P_i)=m) \hspace{.3cm}$
 	\begin{equation}
 	R_i\triangleq R_{l+1-i}=P_iQ_i{P_i}^{-1}  i=1,...,l
 	\end{equation}
 	where: the $m\times m$ matrix $P_i$ can be solved from the following matrix equation $i=1,...,m$
 	\begin{equation}
 	Vec(P_i)=(G_{Ni}(Q_i))^{-1}Vec(I_m)
 	\end{equation}
 	where $G_{Ni}(Qi)$  $(\mbox{rank}(G_{N_i}(Q_i))=m^2)$ is defined by:\\ [.2cm]
 	\begin{equation*}
 	\begin{array}{cc}
 	G_{Ni}(Q_i)\triangleq & {(Q_i^{l-i})}^T{\bigotimes} I_m+{(Q_i^{l-i-1})}^T{\bigotimes}
 	A_{1i}+... \\
 	~~ & +{Q_i}^T{\bigotimes} A_{(l-i)i}+I_m{\bigotimes} A_{(l-i)i}
 	\end{array}
 	\end{equation*}

 	in the same way the complete set of spectral factors $Q_i, \hspace{0.5cm} i=1,2,...,l$ can be converted into the left solvent $L_i, \hspace{0.5cm} i=1,2,...,l$ using the following algorithm:
 	\begin{equation}
 	M_i(\lambda)=M_{i-1}(\lambda)(\lambda I_m-Q_i)^{-1}, \hspace{0.2cm} i=1,...,l
 	\end{equation}
 	\begin{equation}
 	M_i(\lambda)=I_m\lambda^{l-i} + A_{1i}\lambda^{l-i-1}+...+A_{(l-i-1)i}\lambda+A_{(l-i)i}
 	\end{equation}
 	\begin{equation*}
 	\begin{array}{cc}
 	H_{Mi}(Q_i)\triangleq & I_m{\bigotimes} Q_i^{l-i}+{A^T_{1i}}{\bigotimes}Q_i^{l-i-1}
 	+... \\
 	~~ & +{A^T_{(l-i-1)i}}{\bigotimes}Q_i +A^T_{(l-i)i}{\bigotimes} I_m 
 	\end{array}
 	\end{equation*}
 	\begin{equation*}
 	Vec(S_i)=(H_{Mi}(Q_i))^{-1}Vec(I_m) , \hspace{.5cm} \mbox{rank}(H_{M_i}(Q_i))=m^2
 	\end{equation*}
 	\begin{equation}
 	L_i=S_i^{-1}Q_iS_i \hspace{.5cm} i=1,...,l
 	\end{equation}
 	
 	\subsection{Block companion forms}
 	A useful tool for the analysis of matrix polynomials is the block companion form matrix.
 	Given a $\lambda$-matrix as in eq(\ref{eq.1}) where $A_i \in C^{m\times m}$ and $\lambda\in C$, the associated block companion
 	form matrices right (left) are:
 	\begin{equation}
 	{A_R=}{\left(
 		\begin{array}{cccc}
 		O_m & I_m & \cdots & O_m \\
 		O_m & O_m & \cdots & O_m \\
 		\vdots & \vdots & \ldots & O_m \\
 		O_m & O_m & \vdots & I_m \\
 		-A_l & -A_{l-1} & \cdots & -A_1 \\
 		\end{array}
 		\right)},
 	{A_L=}{\left(
 		\begin{array}{ccccc}
 		O_m  & \cdots & O_m & -A_{l} \\
 		I_m  & \cdots & O_m & -A_{l-1} \\
 		\vdots  & \vdots & \vdots & \vdots \\
 		O_m & \cdots & O_m & -A_{2} \\
 		O_m & \cdots & I_m & -A_{1} \\
 		\end{array}
 		\right)} \vspace{0.3cm}
 	\end{equation}
 	Note that: $A_L$  is the block transpose of $A_R$  . If the matrix polynomial $A(\lambda)$  has a complete set of solvents,
 	these companion matrices can be respectively block diagonalised via the right(left) block Vandermande matrix defined by:
 	
 	\begin{equation}
 	V_{R}=\begin{pmatrix}
 	&I_{m}  &I_{m}  &\cdots & I_{m}\\ 
 	&R_{1}  &R_{2} &\cdots  &R_{l} \\ 
 	&\vdots   &\vdots  &\ddots &\vdots \\ 
 	&R_{1}^{l-1}  &R_{2}^{l-1} &\cdots  &R_{l}^{l-1} 
 	\end{pmatrix}
 	V_{L}=\begin{pmatrix}
 	&I_{m}  &L_{1}  &\cdots & L_{1}^{l-1}\\ 
 	&I_{m}  &L_{2}  &\cdots & L_{2}^{l-1}\\ 
 	&\vdots   &\vdots  &\ddots  &\vdots \\ 
 	&I_{m}  &L_{l}  &\cdots & L_{l}^{l-1} 
 	\end{pmatrix}
 	\end{equation}
 	where $R_1,R_2,...,R_l$ and/or $L_1,L_2,...,L_l$ represent the complete set of right (left) solvents.
 	Since the block Vandermande matrices are nonsingular [\cite{1}],[\cite{2}] and [\cite{39}] we can write \vspace{0.2cm}
 	\begin{equation}
 	{V_R}^{-1}A_RV_R=\mbox{Blockdiag}(R_1,R_2,...,R_l) \vspace{0.2cm}
 	\end{equation}
 	\begin{equation}
 	{V_L}^{-1}A_LV_L=\mbox{Blockdiag}(L_1,L_2,...,L_l) \vspace{0.3cm}
 	\end{equation}
 	These similarity transformations do a block decoupling of the spectrum of $A(\lambda)$ which is very useful
 	in the analysis and design of large order control systems.

 \section{Special Fractorization Algorithms}
 
 In this section we are going to present algorithms that can factorize a linear term from
 a given matrix polynomial. Firstly we give the generalized quotient difference algorithm and next we give a new
 extended algorithm based on the Horner scheme. The matrix quotient-difference $Q.D.$ algorithm is a generalization of the scalar case [\cite{35}] and it is developed in [\cite{03}]
 The scalar $Q.D.$ algorithm is used for finding the roots of a scalar polynomial. The Quotient-Difference scheme for matrix polynomials can be defined just like the scalar one [\cite{09}] and it is consists on building a table that we call the $Q.D.$ tableau.
 
 \subsection{The right block matrix Q.D. algorithm}
 Given a matrix polynomial with nonsingular coefficients as in eq(\ref{eq.1}).The objective is to find the spectral
 factors of $A(\lambda)$ that will allow as write $A(\lambda)$ as a product of $n$-linear factors as in eq(\ref{eq.1}).
 The block left companion form, is:
 \begin{equation}
 C_3={\left(
 	\begin{array}{ccccc}
 	-A_{1} & I_m &\cdots&  O_m\\
 	-A_{2} & O_m &\cdots& O_m\\
 	\vdots  & \vdots &\vdots& \vdots \\
 	-A_{l-1} & \cdots &\cdots& I_m\\
 	-A_{l}& \cdots  &\cdots& O_m \\
 	\end{array}
 	\right)}
 \end{equation}
 The required transformation is a sequence of $LR$ decomposition such that:
 \begin{equation}
 C_3=\left(
 \begin{array}{cc}
 C_{11} & C_{12} \\
 C_{21} & C_{22} \\
 \end{array}
 \right)=\left(
 \begin{array}{cc}
 I_{m} & O_{m} \\
 X_{m} & I_{m} \\
 \end{array}
 \right)\left(
 \begin{array}{cc}
 A & B \\
 C & D \\
 \end{array}
 \right)
 \end{equation}
 where:
 \begin{equation*}
 C_{11}={\left(
 	\begin{array}{ccccc}
 	-A_{1} & I_m &\cdots&  O_m\\
 	-A_{2} & O_m &\cdots& O_m\\
 	\vdots  & \vdots &\vdots& \vdots \\
 	-A_{l-2} & \cdots &\cdots& I_m\\
 	-A_{l-1}& \cdots  &\cdots& O_m \\
 	\end{array}
 	\right)},
 C_{12}={\left(
 	\begin{array}{c}
 	O_m \\
 	O_m \\
 	\vdots \\
 	O_m \\
 	I_m \\
 	\end{array}
 	\right)}
 \end{equation*}
 \begin{equation*}
 C_{21}={[-A_l\hspace{0.2cm}O_m\hspace{0.2cm}O_m,...,O_m]}, \hspace{0.5cm} C_{22}={[O_m]}
 \end{equation*}
 It is required to have $C=0$, then let
 \begin{equation}
 X={[-X_1\hspace{0.2cm}X_2\hspace{0.2cm}X_3,...,X_{l-1}]}
 \end{equation}
 We obtain the following set of equations:
 \begin{table}[H]
 	\begin{center}
 		{\begin{tabular}{c}
 				$X_1A_1+X_2A_2+...+X_{l-1}A_{l-1}=A_l$\\
 				$X_1=X_2=...=X_{l-2}=0$\\
 				$X_{l-1}+D=0$ \vspace{-0.5cm}
 		\end{tabular}}
 	\end{center}
 \end{table}
 \hspace{-0.5cm}Leading the following decomposition of $C_3$:
 \begin{equation*}
 {C_3=\left(
 	\begin{array}{ccccc}
 	I_m  & \cdots & O_m & O_m \\
 	O_m  & \cdots & O_m & O_m \\
 	\vdots & \vdots & \vdots & \vdots \\
 	O_m  & \cdots & I_m & O_m \\
 	O_m & \cdots & A_lA^{-1}_{l-1} & I_m \\
 	\end{array}
 	\right)
 	\left(
 	\begin{array}{ccccc}
 	-A_{1} & I_m &\cdots&  O_m\\
 	-A_{2} & O_m &\cdots& O_m\\
 	\vdots  & \vdots &\vdots& \vdots \\
 	-A_{l-1} & O_{m} &\cdots& I_m\\
 	O_{m}& O_{m}  &\cdots& -A_lA^{-1}_{l-1} \\
 	\end{array}
 	\right)}
 \end{equation*}
 Hence $C_3$ can be written as:
 \begin{equation}
 C_3=L_{-(l-2)}R_{-(l-2)}
 \end{equation}
 Continuing this process of the block $C_{11}$ up when $C_3$ is equivalent to a matrix $R_0$
 \begin{equation}
 C_3=L_{-(l-2)}L_{-(l-3)}...L_0R_0
 \end{equation}
 \begin{equation}
 R_0={\left(
 	\begin{array}{ccccc}
 	-A_1 & I_m & \cdots & O_m & O_m \\
 	O_m & -A_2A^{-1}_1 & \cdots & O_m & O_m \\
 	\vdots & \vdots & \vdots & \vdots & \vdots \\
 	O_m & O_m & \cdots & -A_{l-1}A^{-1}_{l-2} & I_m \\
 	O_m & O_m & \cdots & O_m & -A_lA^{-1}_{l-1} \\
 	\end{array}
 	\right)}
 \end{equation}
 It is clear that if the matrices $L_0,L_{-1},...,L_{l-2}$ are identity matrices, then the block
 companion matrix $C_{3}$ will be :
 \begin{equation}
 M={\left(
 	\begin{array}{ccccc}
 	Q_1 & I_m & \cdots & O_m & O_m \\
 	O_m & Q_2 & \cdots & O_m & O_m \\
 	\vdots & \vdots & \vdots & \vdots & \vdots \\
 	O_m & O_m & \cdots & Q_{l-1} & I_m \\
 	O_m & O_m & \cdots & O_m & Q_{l} \\
 	\end{array}
 	\right)}
 \end{equation}
 The following theorem shows that under certain conditions, the sequence of  $L_0,L_{-1},...,L_{l-2}$  converges to
 identities see [\cite{37}] and [\cite{09}]:

 \begin{theorem} Let $M=X \Lambda X^{-1}$ where
 	\begin{equation}
 	\Lambda={\left(
 		\begin{array}{cccc}
 		R_1 & O_m & \cdots & O_m \\
 		O_m & R_2 & \cdots & O_m \\
 		\vdots & \vdots & \ddots & \vdots \\
 		O_m & O_m & \cdots & R_l \\
 		\end{array}
 		\right)}
 	\end{equation}
 	
 	If the following conditions are satisfied:\\ [.15cm]
 	a) dominance relation between $R_k$: $R_1>R_2>....R_l$ \\
 	b) $X^{-1}=Y$ has a Block $LR$ factorization $L_yR_y$ \\
 	c) $X$ has a Block $LR$ factorization $L_xR_x$ \\ [0.15cm]
 	Then the block $LR$ algorithm just defined converges (i.e. $L_k\rightarrow I$)\\ [0.15cm]
 	
 \end{theorem}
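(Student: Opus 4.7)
The plan is to adapt the classical Rutishauser $LR$-convergence proof to the block setting, exploiting the dominance of the $R_i$'s and the two block $LR$ factorizations in hypotheses (b) and (c). I would carry this out in four steps.

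First, I would establish the telescoping identity that, at step $s$ of the block $LR$ iteration with $M_s=L_sR_s$ and $M_{s+1}=R_sL_s$, the powers of the initial matrix admit the factorization
\[
M^{k}=(L_{0}L_{1}\cdots L_{k-1})\,(R_{k-1}\cdots R_{1}R_{0}),
\]
which is itself a valid block $LR$ decomposition (lower-block-triangular with identity diagonal, times upper-block-triangular). Then, using $M=X\Lambda X^{-1}$ together with hypotheses (b) and (c), I would rewrite
\[
M^{k}=L_{x}R_{x}\Lambda^{k}L_{y}R_{y}=L_{x}\,R_{x}\,E_{k}\,\Lambda^{k}R_{y},\qquad E_{k}:=\Lambda^{k}L_{y}\Lambda^{-k}.
\]
Because $L_y$ is block lower triangular with identity diagonal, $E_k$ has the same shape, with strictly-lower $(i,j)$-block equal to $R_{i}^{k}L_{y,ij}R_{j}^{-k}$. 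The dominance hypothesis (a), read as: every eigenvalue modulus of $R_i$ is strictly smaller than every eigenvalue modulus of $R_j$ whenever $i>j$, yields the exponential decay of $\|R_{i}^{k}L_{y,ij}R_{j}^{-k}\|$, so $E_{k}\to I$ and therefore $R_{x}E_{k}\to R_{x}$.

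Third, since $R_{x}$ is nonsingular upper block triangular, $R_{x}E_{k}$ is a small perturbation of $R_{x}$ for $k$ large, and by continuity of block $LR$ factorization on the open set where all leading principal block minors are nonsingular, it admits a unique factorization $R_{x}E_{k}=\widetilde{L}_{k}\widetilde{R}_{k}$ with $\widetilde{L}_{k}\to I$ and $\widetilde{R}_{k}\to R_{x}$. Substituting back produces a second block $LR$ decomposition of $M^{k}$, namely $M^{k}=(L_{x}\widetilde{L}_{k})(\widetilde{R}_{k}\Lambda^{k}R_{y})$. Matching with the telescoping identity and invoking uniqueness of the block $LR$ factorization gives $L_{0}L_{1}\cdots L_{k-1}=L_{x}\widetilde{L}_{k}$, which therefore converges to $L_{x}$. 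Taking consecutive quotients finally yields $L_{k}=(L_{0}\cdots L_{k-1})^{-1}(L_{0}\cdots L_{k})\longrightarrow L_{x}^{-1}L_{x}=I$, as required.

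The main obstacle is the block version of the scalar estimate ``$\|R_{i}^{k}\|\cdot\|R_{j}^{-k}\|$ decays geometrically.'' In the scalar case this is immediate from $|\lambda_i|<|\lambda_j|$; in the block case the $R_i$ need not commute with the off-diagonal blocks $L_{y,ij}$ and may be defective, so one must deduce the decay from the Jordan form of each $R_i$, controlling the polynomial growth with the strict gap provided by (a) via an estimate of the shape $\|R_{i}^{k}\|\le C_i(\rho(R_{i})+\varepsilon)^{k}$ and $\|R_{j}^{-k}\|\le C_j(\rho(R_{j}^{-1})+\varepsilon)^{k}$ for $\varepsilon$ small enough that the product still tends to zero. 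Once this block-dominance estimate is in hand, the rest of the proof (existence and continuity of the block $LR$ factorization, its uniqueness, and the final telescoping) follows the scalar template verbatim.
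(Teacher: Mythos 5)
The paper does not actually prove this theorem: it states it and defers to Barnett [37] and Dahimene's thesis [09], so there is no in-paper argument to compare yours against. Judged on its own, your proposal is the standard Rutishauser--Wilkinson convergence proof for the $LR$ iteration transplanted to the block setting, and the skeleton is sound: the telescoping identity $M^{k}=(L_{0}\cdots L_{k-1})(R_{k-1}\cdots R_{0})$, the substitution $M^{k}=L_{x}R_{x}\Lambda^{k}L_{y}R_{y}$ with $E_{k}=\Lambda^{k}L_{y}\Lambda^{-k}$, the decay of the strictly lower blocks $R_{i}^{k}L_{y,ij}R_{j}^{-k}$, continuity and uniqueness of the block $LR$ factorization near $R_{x}$, and the final quotient $L_{k}=(L_{0}\cdots L_{k-1})^{-1}(L_{0}\cdots L_{k})\to I$. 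You also correctly isolate the one genuinely block-specific difficulty, namely that noncommutativity and possible defectiveness of the $R_{i}$ force you to replace the scalar estimate by $\|R_{i}^{k}\|\,\|R_{j}^{-k}\|\le C(\rho(R_{i})+\varepsilon)^{k}(\rho(R_{j}^{-1})+\varepsilon)^{k}$ with $\varepsilon$ chosen from the spectral gap; this is exactly the right reading of the paper's loosely stated dominance condition $R_{1}>R_{2}>\cdots>R_{l}$ (i.e.\ $\max_{s}|\lambda_{s}(R_{i})|<\min_{s}|\lambda_{s}(R_{j})|$ for $i>j$), and it is the hypothesis the theorem actually needs.

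Two small points you should make explicit if you write this up in full. First, uniqueness of the block $LR$ factorization of $M^{k}$ (needed to match the two decompositions) requires that the leading principal block submatrices of $M^{k}$ be nonsingular for every $k$; this does follow from your second factorization once $\widetilde{L}_{k}$ exists, but the logical order matters, since you invoke uniqueness only after exhibiting both factorizations. Second, the ``algorithm just defined'' in the paper begins with a finite sweep of eliminations $L_{-(l-2)},\dots,L_{0}$ that reduce the block companion matrix to block Hessenberg/bidiagonal form before the genuine $LR$ iteration starts; your proof covers the iterative phase, which is the only part where convergence is at issue, so this is a matter of bookkeeping rather than a gap.
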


 The above theorem states that we can start the $Q.D.$ algorithm by considering:
 \begin{equation*}
 {
 	{E_1}^{(0)}=-A_2{A_1}^{-1}, \hspace{0.5cm} {E_2}^{(-1)}=-A_3{A_2}^{-1},}
 \end{equation*}
 \begin{equation*}
 {
 	{E_3}^{(-2)}=-A_4{A_3}^{-1},...,{E_{l-1}}^{-(l-2)}=-A_l{A_{l-1}}^{-1}}
 \end{equation*}
 \begin{equation*}
 {
 	{Q_1}^{(0)}=-A_1, {Q_2}^{(-1)}=O_m,...,{Q_{l-1}}^{-(l-2)}=O_m}
 \end{equation*}
 The  last two equations provide us with the first two rows of the $Q.D.$ tableau (one row of $Q's$ and one row of $E's$). Hence, we can solve the rhombus rules for the bottom element (called the south element by Henrici [\cite{35}]). We obtain the row generation of the $Q.D.$ algorithm:
 \begin{equation}
 {
 	\left\{
 	\begin{array}{ll}
 	{Q_i}^{(j+1)}={Q_i}^{(j)}+{E_i}^{(j)}-{E_i}^{(j+1)}  \\
 	{E_i}^{(j+1)}={Q_{i+1}}^{(j)}{E_i}^{(j)}\left[
 	\begin{array}{c}
 	{Q_i}^{(j+1)}
 	\end{array}
 	\right]^{-1}
 	\end{array}
 	\right.}
 \end{equation}
 Writing this in tabular form yield
 
 \begin{figure}[!h]
 	\centering
 	\includegraphics[scale=0.3]{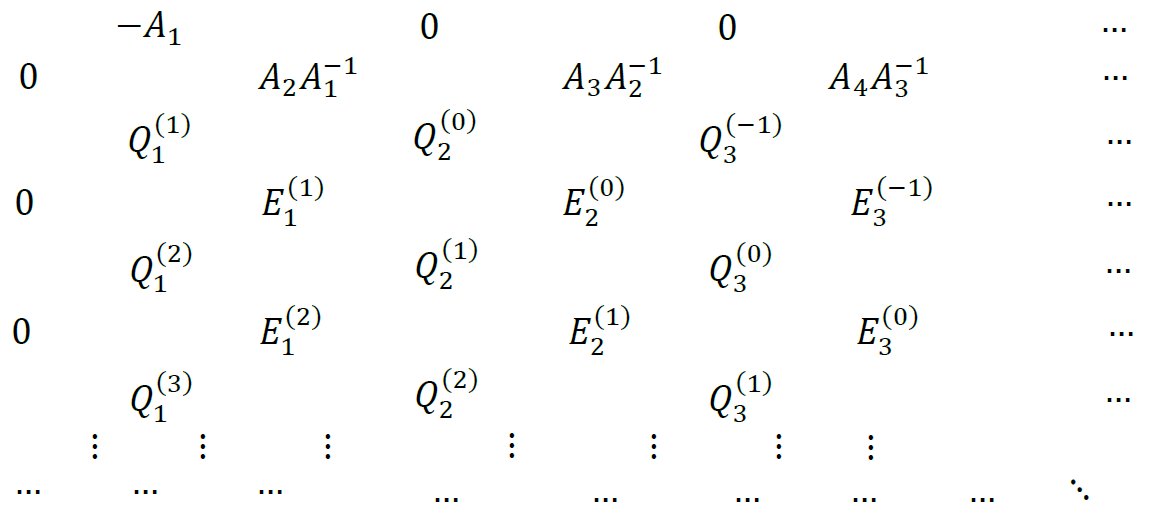}\label{figure1}
 	\caption*{$Q.D$ algorithm}
 \end{figure}

 where the ${Q_i}^{(j)}$ are the spectral factors of $ A(\lambda)$ . In addition, the $Q.D.$ algorithm gives all spectral factors simultaneously and in dominance order. We have chosen, in the above, the row generation algorithm because it is numerically stable see reference [\cite{09}] for details.

 \begin{example} Consider a matrix polynomial of $2^{nd}$ order and $3^{nd}$ degree with the following matrix coefficients.
 	\begin{tabbing}
 		$A_0={\left(
 			\begin{array}{cc}
 			1 & 0 \\
 			0 & 1 \\
 			\end{array}
 			\right)}$, \hspace{0.5cm}
 		$A_1={\left(
 			\begin{array}{cc}
 			-27.1525 & 0.8166 \\
 			-179.7826 & 38.1525 \\
 			\end{array}
 			\right)}$ \\ \\
 		$A_2={\left(
 			\begin{array}{cc}
 			116.4 & 85.0 \\
 			1043.4 & 836.7 \\
 			\end{array}
 			\right)}$, \hspace{0.5cm}
 		$A_3={\left(
 			\begin{array}{cc}
 			126.9 & 353.5 \\
 			1038.7 & 2947.6 \\
 			\end{array}
 			\right)}$
 	\end{tabbing}
 	We apply now the generalized row generation $Q.D.$ algorithm to find the complete set of spectral factors and then
 	we use the similarity transformations given by Shieh [\cite{24}] to obtain the complete set of solvents both left and right equations(19)-(26).\\ [.15cm]
 	\textbf{Step 1:} initialization of the program to start\\
 	Enter the degree and the order $m=2, l=3$ \\
 	Enter the number of iterations $N=35$\\
 	Enter the matrix polynomial coefficients $A_i$\\ [0.15cm]
 	\textbf{Step 2:} Construct $Q_1$ and $E_1$ the first row of $Q's$ and first row of $E's$\\
 	$Q_1=[-A_1A_0^{-1} \hspace{0.2cm} O_2 \hspace{0.2cm}O_2 ],\hspace{0.2cm}E_1=[O_2 \hspace{0.2cm} A_2 A_1^{-1}\hspace{0.2cm}A_3 A_2^{-1}\hspace{0.2cm}O_2]$\\ [.15cm]
 	\textbf{Step 3: } Building or generating the rest rows using the rhombus rules \vspace{-0.5cm} 
 	\begin{figure}[H]
 		\hspace{-.4cm}
 		\includegraphics[width=9cm,height=7cm]{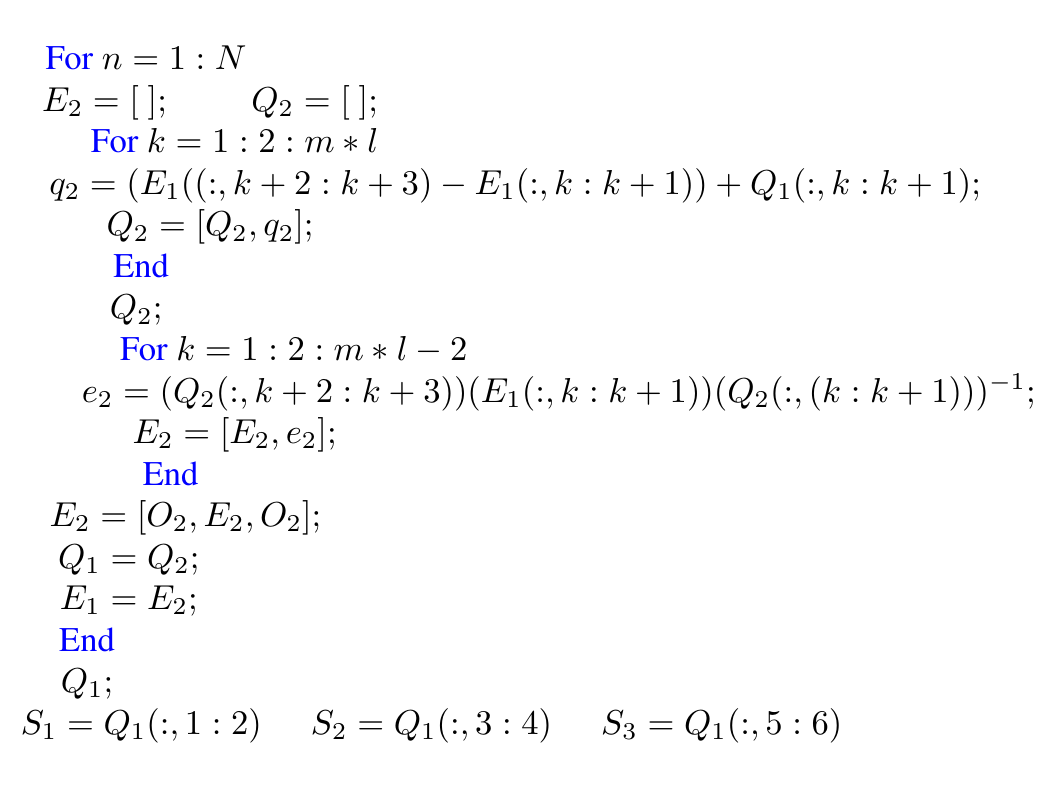}\vspace{-.6cm}
 	\end{figure}
 	
 	\hspace{-0.35cm}Running the above steps (1) to (3) we  obtain the following complete set of spectral factors $S_i$  :\\
 	\begin{equation*}
 	Q_1={\left(
 		\begin{array}{cccccc}
 		~~~~3.0000 & ~~~~2.0000 & -8.2908 & 0.7118 & 32.4434 & -3.5284 \\
 		-90.000 & -15.000 & -16.8400 & 8.1248 & 286.6226 & -31.2773
 		\end{array}
 		\right)}
 	\end{equation*}
 	\begin{equation*}
 	S_1={\left(
 		\begin{array}{cc}
 		~~~~3.0 & ~~~~2.0 \\
 		-90.0 & -15.0 \\
 		\end{array}
 		\right)},
 	S_2={\left(
 		\begin{array}{cc}
 		-8.2908 & 0.7118 \\
 		-16.8400 & 8.1248 \\
 		\end{array}
 		\right)},
 	\end{equation*}
 	\begin{equation*}
 	S_3={\left(
 		\begin{array}{cc}
 		32.4434 & -3.5284 \\
 		286.6226 & -31.2773 \\
 		\end{array}
 		\right)}
 	\end{equation*}
 	Now, we should extract a complete set of right solvent from those block spectra using the algorithmic similarity
 	transformations in equations from (21) to (24).\\ [.3cm]
 	\textbf{Step 4:} Reverse the orientation of spectral factors \\
 	$U_1=S_3; \hspace{.3cm}U_2=S_2; \hspace{.3cm}U_3=S_1$ \\
 	\textbf{Step 5:}Evaluate the coefficients  using the synthetic long division and then find the corresponding transformation matrix as in theorem \ref{theorem7}.
 	\begin{table}[H]
 		\begin{center}
 			{\begin{tabular}{c}
 					$N_{11}=A_1+U_1;$\\
 					$N_{12}=A_2+U_1\star N_{11};$\\
 					$G_1=(U_1^2 )^T \bigotimes I_2+(U_1 )^T\bigotimes N_{11}+I_2\bigotimes N_{12};$\\
 					$vecp_1=G_1^{-1} \star [1;0;0;1]$\\
 					$p_1=[vecp_1 (1:2),vecp_1 (3:4)];$\\
 					$R_1=p_1\star U_1\star (p_1)^{-1};$ \vspace*{-.3cm}
 			\end{tabular}}
 		\end{center}
 	\end{table}
 	\hspace{-0.45cm} You can verify the first solvent using: \\
 	rightzero1$=A_0\star(R_1)^3  + A_1\star(R_1)^2  + A_2\star R_1  + A_3$\\ [0.3cm]
 	\textbf{Step 6:} redo the same process for the next right solvents
 	\begin{table}[H]
 		\begin{center}
 			{\begin{tabular}{c}
 					$N_{21}=N_{11}+U_2;$ \\
 					$G_2=(U_2 )^T\bigotimes I_2+I_2\bigotimes N_{21};$ \\
 					$vecp_2=G_2^(-1)\star[1;0;0;1]$ \\
 					$p_2=[vecp_2 (1:2),vecp_2 (3:4)];$ \\
 					$R_2=p_2\star U_2\star (p_2)^{-1};$ \vspace*{-.3cm}
 			\end{tabular}}
 		\end{center}
 	\end{table}
 	\hspace{-0.45cm} For verification also you can use:\\
 	rightzero 2$=A_0\star (R_2)^3  + A_1\star (R_2)^2  + A_2\star R_2  + A_3$ \\ [0.3cm]
 	\textbf{Step 7:}The last solvents are obtained  directly from the most left spectral factor: $R_3=S_1$
 	or by usinguse the  transformation:
 	\begin{table}[H]
 		\begin{center}
 			{\begin{tabular}{c}
 					$G_3=(I_2 )^T\bigotimes I_2;$\\
 					$vecp_3=G_3^{-1}\star [1;0;0;1]$\\
 					$p_3=[vecp_3 (1:2),vecp_3 (3:4)  ];$\\
 					$R_3=p_3\star U_3\star p_3^{-1}=U_3;$\\ \vspace*{-.3cm}
 			\end{tabular}}
 		\end{center}
 	\end{table}
 	\hspace*{-.45cm}The final results are :
 	\begin{equation*}
 	R_1={\left(
 		\begin{array}{cc}
 		0.3637 & -4.5495 \\
 		-0.8183 & 0.8024 \\
 		\end{array}
 		\right)},
 	R_2={\left(
 		\begin{array}{cc}
 		7.2354 & 1.4024 \\
 		1.2995 & -7.4015 \\
 		\end{array}
 		\right)},
 	\end{equation*}
 	\begin{equation*}
 	R_3={\left(
 		\begin{array}{cc}
 		~~~~3.0000& ~~~~2.0000 \\
 		-90.000 & -15.000 \\
 		\end{array}
 		\right)}
 	\end{equation*}
 	Finally, we can also obtain the corresponding complete set of left solvents using the algorithmic similarity
 	transformation described in equations from (10) to (12).\\ [.3cm]
 	\textbf{Step 8:} coefficients determination using the synthetic long division
 	\begin{table}[H]
 		\begin{center}
 			{\begin{tabular}{c}
 					$B_0i=I_2;$ \\
 					$B_{1i}=A_1+R_i;$\\
 					$B_{2i}=A_2+B_{1i}\star R_i;$ \vspace*{-.5cm}
 			\end{tabular}}
 		\end{center}
 	\end{table}
 	\hspace*{-.35cm}
 	\textbf{Step 9:} find the corresponding similarity transformation matrix as in equations from equations (10) to (12). \vspace{-0.3cm}
 	\begin{figure}[H]
 		\includegraphics[width=9cm,height=3.5cm]{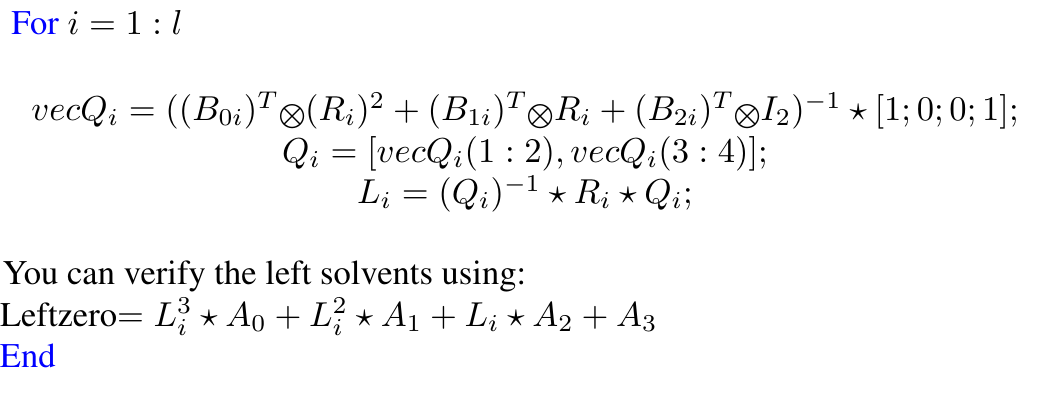}\vspace{-0.5cm}
 	\end{figure}
 	\hspace{-0.7cm}The left solvents are now obtained:\\
 	\begin{equation*}
 	L_1={\left(
 		\begin{array}{cc}
 		32.443 & -3.5284 \\
 		286.622 & -31.2773 \\
 		\end{array}
 		\right)},~~
 	L_2={\left(
 		\begin{array}{cc}
 		25.1323 & -2.8370 \\
 		204.5931 & -25.2983 \\
 		\end{array}
 		\right)},
 	\end{equation*}
 	
 	\begin{equation*}
 	L_3={\left(
 		\begin{array}{cc}
 		21.0123& -4.6531 \\
 		178.0910 & -33.0123 \\
 		\end{array}
 		\right)}
 	\end{equation*}
 \end{example}
 
 \subsection{Extended Horner algorithm}
 Horner's method is a technique to evaluate polynomials quickly. It needs $l$ multiplications and $l$ additions and  it is also a nested algorithmic programming that can decompose a polynomial into a
 multiplication of $l$ linear factors (Horner's method) based on the Euclidian synthetic long division. 
 
 Similarly Horner's method is a nesting technique requiring only $l$ multiplications and $l$
 additions to evaluate an arbitrary $l^{th}$-degree polynomial [\cite{33}]. 
 
 \begin{theorem} Let the function $P(x)$ be the polynomial of degree $l$ defined on the real field  $P: R\rightarrow R$
 	where: $a_i$  are constant coefficients and $x$ is real variable.
 	\begin{equation}
 	P(x)=a_0 x^l+a_1 x^{l-1}+...+a_{l-1} x+a_l
 	\end{equation}
 	If  $b_0=a_0$  and   $b_k=a_k+b_{k-1}\alpha ,   \hspace{.5cm} k=l,...,2,1$\\
 	Then  $b_l=P(\alpha) \hspace{.2cm} \mbox{and} \hspace{.2cm} P(x)$   can be written as:
 	\begin{equation}
 	P(x)=(x-\alpha)Q(x)+b_l
 	\end{equation}
 	Where:
 	\begin{equation}
 	Q(x)=b_0 x^{l-1}+b_1 x^{l-2}+...+b_{l-2} x+b_{l-1}\\ \vspace{0.2cm}
 	\end{equation}
 \end{theorem}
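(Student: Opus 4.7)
The statement to prove is the classical scalar Horner theorem, so the natural strategy is a direct verification by coefficient comparison rather than anything subtle. The plan is to assume the recurrence $b_0=a_0$, $b_k=a_k+b_{k-1}\alpha$ for $k=1,\dots,l$, form the candidate right-hand side $(x-\alpha)Q(x)+b_l$ with $Q(x)=\sum_{k=0}^{l-1}b_k x^{l-1-k}$, expand the product, and show every coefficient matches $P(x)$.

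First, I would expand $(x-\alpha)Q(x)$ termwise. Multiplying $Q(x)$ by $x$ shifts its monomials up by one degree, producing $\sum_{k=0}^{l-1}b_k x^{l-k}$, while multiplying by $-\alpha$ produces $-\alpha\sum_{k=0}^{l-1}b_k x^{l-1-k}$. Re-indexing the second sum so both series run over the same powers of $x$, the coefficient of $x^{l-k}$ in $(x-\alpha)Q(x)$ becomes $b_k-\alpha b_{k-1}$ for $1\le k\le l-1$, the leading coefficient (that of $x^l$) is $b_0$, and the constant term (from $-\alpha Q(x)$ alone) is $-\alpha b_{l-1}$. Adding the free term $b_l$ gives a constant equal to $b_l-\alpha b_{l-1}$.

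Next, I would invoke the defining recurrence. Since $b_0=a_0$, the leading coefficient is already $a_0$. For every intermediate index $k\in\{1,\dots,l-1\}$ the relation $b_k=a_k+\alpha b_{k-1}$ rearranges to $b_k-\alpha b_{k-1}=a_k$, matching the coefficient of $x^{l-k}$ in $P(x)$. The same identity applied at $k=l$ shows that the constant term $b_l-\alpha b_{l-1}$ equals $a_l$. Hence $(x-\alpha)Q(x)+b_l=P(x)$ identically in $x$, establishing the division formula \eqref{eq.1}-style decomposition claimed in the theorem.

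Finally, evaluating the identity at $x=\alpha$ collapses the product term and yields $P(\alpha)=b_l$, which proves the first assertion. There is no real obstacle here; the only thing to be careful about is keeping the index bookkeeping consistent between the two shifted sums in the expansion of $(x-\alpha)Q(x)$ so that the recurrence $b_k-\alpha b_{k-1}=a_k$ is applied at precisely the matching power of $x$. A one-line alternative would be induction on $l$: the base case $l=1$ is immediate, and the inductive step factors $P(x)=a_0 x\cdot x^{l-1}+\tilde P(x)$ and applies the hypothesis to the degree-$(l-1)$ polynomial obtained after one synthetic-division step, but the direct coefficient-matching argument above is cleaner and mirrors the block version used later in the paper.
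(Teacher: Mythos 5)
Your proof is correct and follows essentially the same route as the paper: a direct expansion of $(x-\alpha)Q(x)+b_l$ and identification of coefficients with those of $P(x)$, using the recurrence $b_k=a_k+b_{k-1}\alpha$. The only difference is cosmetic — you run the coefficient comparison forward from the recurrence and add the explicit evaluation at $x=\alpha$ to obtain $P(\alpha)=b_l$, whereas the paper derives the recurrence by identification; the substance is identical.
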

 
 \begin{proof}
 	The theorem can be proved using a direct calculation.\vspace{-0.1cm}
 	\begin{table}[H]
 		\begin{center}
 			{\begin{tabular}{c}
 					$P(x)=a_0 x^l+a_1 x^{l-1}+...+a_{l-1} x+a_l$\\ \\
 					$P(x)=(x-\alpha)(b_0 x^{l-1}+b_1 x^{l-2}+...+b_{l-2} x+b_{l-1} )+b_l$ \vspace{-.5cm}
 			\end{tabular}}
 		\end{center}
 	\end{table}
 	\hspace{-.4cm}Identifying the coefficients of $x$  with different powers we get:
 	\begin{table}[H]
 		\begin{center}
 			\vspace{-.2cm}
 			{\begin{tabular}{cc}
 					\hspace{1.5cm} $b_0=a_0$ & \hspace{0.5cm}\\
 					\hspace{1.5cm} $b_1=a_1+b_0\alpha$ & \hspace{0.5cm}\\
 					\hspace{1.5cm} $\vdots$ & \hspace{0.5cm}\\
 					\hspace{1.5cm} $b_k=a_k+b_{k-1}\alpha$  & where  $k=l,l-1,...,2$ \vspace{-.2cm}
 			\end{tabular}}
 		\end{center}\vspace{-0.4cm}
 	\end{table}
 	\hspace{-.4cm}Now if $\alpha$ is a root of the polynomial $P(x)$   , then $b_l$  should be zero, and  $a_l+b_{l-1}\alpha=0$ .\\
 	Hence, we may write
 	\begin{equation*}
 	\alpha=-\left(
 	\begin{array}{c}
 	\displaystyle{\frac{a_l}{b_{l-1}}}
 	\end{array}
 	\right)
 	\hspace{.3cm}  \mbox{or}  \hspace{.3cm}    x_{k+1}=-\left(\begin{array}{c}
 	\displaystyle{\frac{a_l}{b_{l-1,k}}}
 	\end{array}
 	\right)
 	\hspace{.3cm} k=0,1,...
 	\end{equation*}
 	The algorithm of Horner method in its recursive formula is then:
 	\begin{equation*}
 	b_{i,k}=a_k+b_{i,k-1} x_k;  \hspace{.3cm} i=1,...,l  \hspace{.3cm} \mbox{and} \hspace{.3cm}  b_{0,k}=a_0
 	\end{equation*}
 \end{proof}
 
 Now we generalize this nested algorithm to matrix polynomials, consider the monic $\lambda$-matrix
 $A(\lambda)$  and according to theorem \ref{theorem1} the matrix $A(\lambda)$  can be factored as: \\
 \begin{equation}
 A(\lambda)=Q(\lambda)(\lambda I-X)+A_R (X)
 \end{equation}
 where
 \begin{flalign*}
 A(\lambda)=&~~A_{0}\lambda^{l}+A_{1}\lambda^{l-1}+...+A_{l}=\sum \limits_{i=0}^{l} A_i \lambda^{l-i} \\
 Q(\lambda)=&~~B_{0}\lambda^{l-1}+B_{1}\lambda^{l-2}+...+B_{l-1}=\sum \limits_{i=0}^{l-1}B_i \lambda^{l-i-1} \\
 ~~&\hspace*{1.2cm} A_R(X)=\mbox{cst} 
 \end{flalign*}
 Using the algorithm of synthetic long division for matrices we obtain:
 \vspace*{-.6cm}
 \begin{table}[H]
 	\hspace*{0.5cm}
 	\begin{center}
 		{\begin{tabular}{cc}
 				\hspace{1.5cm} $B_0=A_0=I_m$  & \hspace{0.2cm}\\
 				\hspace{1.5cm} $B_1=B_0A_1+B_0X$ & \hspace{0.5cm}\\
 				\hspace{1.5cm} $\cdots$ & \hspace{0.5cm}\\
 				\hspace{1.5cm} $B_k=B_0A_k+B_{k-1}X$ & \hspace{0.2cm}$k=1,2,...,l-1$\\
 				\hspace{1.5cm} $O_m=B_0A_l+B_{l-1}X$ & \hspace{0.5cm}
 		\end{tabular}}
 	\end{center}\vspace{-0.6cm}
 \end{table}
 \hspace{-.45cm} From the last two equations we can iterate the process to get recursive algorithm as follow:\\ [.3cm]
 \textbf{Algorithm:}\vspace{-.4cm}
 \begin{figure}[H]
 	\includegraphics[width=6.5cm,height=5cm]{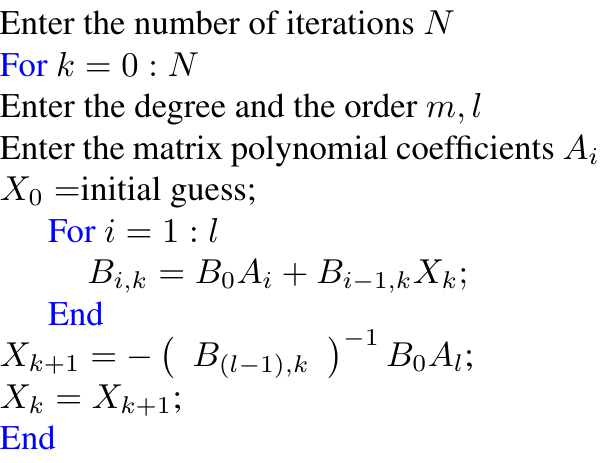}\vspace{-.5cm}
 \end{figure}
 \hspace{-.5cm}When you get the first spectral factor repeat the process until you get the complete set.
 
 \begin{example}Consider a matrix polynomial of $2^{nd}$ order and $3^{rd}$ degree with the following matrix
 	coefficients.
 	\begin{equation*}
 	A(\lambda)=A_{0}\lambda^{3}+A_{1}\lambda^{2}+A_{2}\lambda +A_{3}
 	\end{equation*}
 	With
 	\begin{equation*}
 	\hspace{.5cm} A_0={\left(
 		\begin{array}{cc}
 		1 & 0 \\
 		0 & 1 \\
 		\end{array}
 		\right)},\hspace{1.3cm}
 	A_1={\left(
 		\begin{array}{cc}
 		11.0000 & -1.0000 \\
 		6.7196 & 17.0000 \\
 		\end{array}
 		\right)},
 	\end{equation*}
 	\begin{equation*}
 	A_2={\left(
 		\begin{array}{cc}
 		30.0000& -11.0000 \\
 		70.9107 & 82.5304 \\
 		\end{array}
 		\right)}, \hspace{0.3cm}
 	A_3={\left(
 		\begin{array}{cc}
 		-0.0000 & -30.0000\\
 		182.0000 & 89.8393 \\
 		\end{array}
 		\right)}
 	\end{equation*}
 	We apply now the extended Horner's method via its algorithmic version to find the complete set of spectral factors and
 	then we use the similarity transformations given in [\cite{19}] to obtain the complete set of left and right  solvents \\ [.3cm]
 	The Block Horner scheme is:\vspace{-.05cm}
 	\begin{figure}[H]
 		\includegraphics[width=9cm,height=10cm]{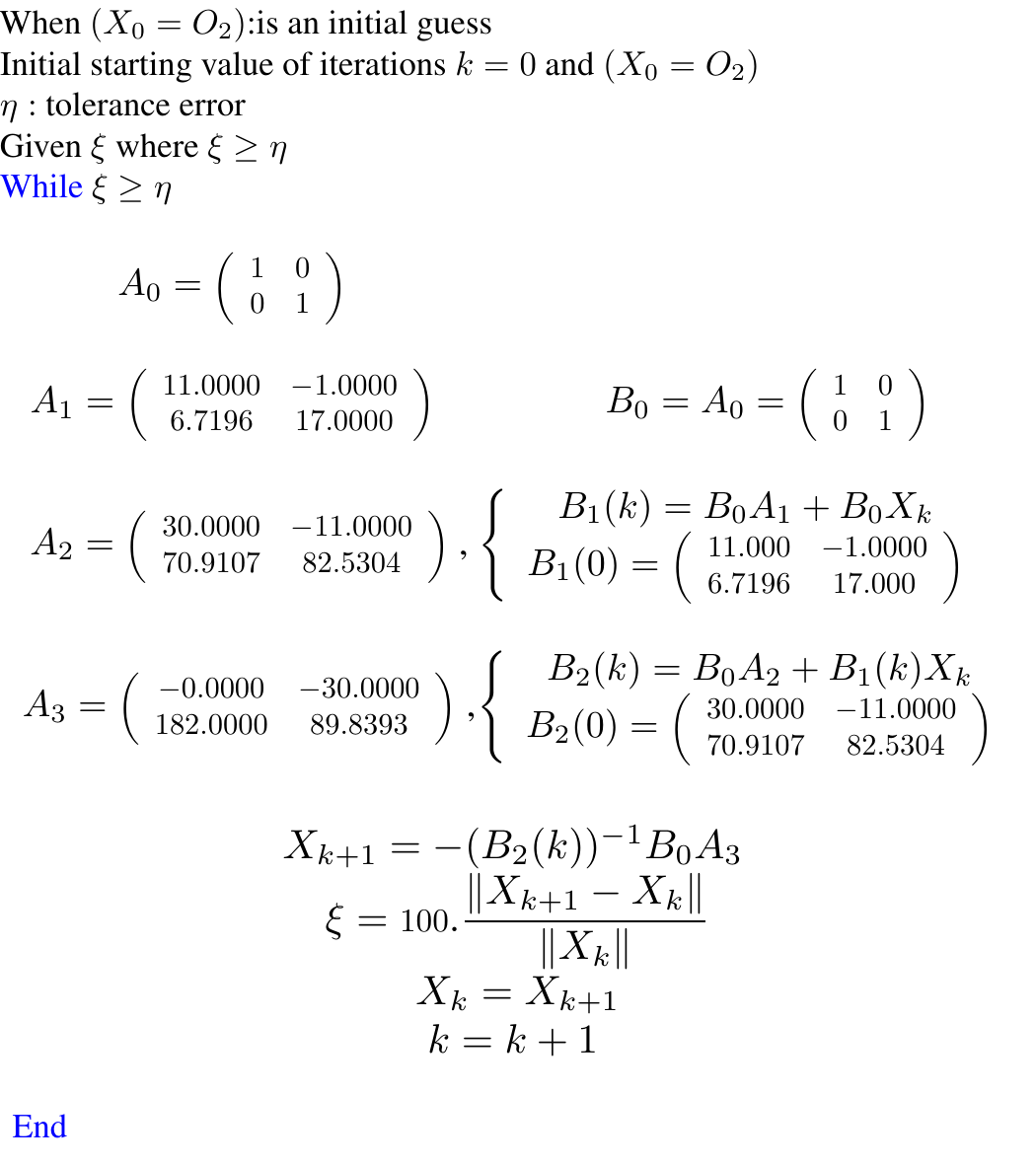}\vspace{-.6cm}
 	\end{figure}
 	\hspace{-.45cm}
 	Running the above algorithm we obtain the next complete set of spectral factors:
 	\begin{equation*}
 	S_1={\left(
 		\begin{array}{cc}
 		0.00 & 1.00 \\
 		-3.25 & 2.00 \\
 		\end{array}
 		\right)}, \hspace{0.5cm}
 	S_2={\left(
 		\begin{array}{cc}
 		-5.000 & 0.000 \\
 		-1.6042& -7.000 \\
 		\end{array}
 		\right)}
 	\end{equation*}
 	\begin{equation*}
 	S_3={\left(
 		\begin{array}{cc}
 		-6.000 & -0.000 \\
 		-1.8655 & 8.000 \\
 		\end{array}
 		\right)} \vspace{0.3cm}
 	\end{equation*}
 	Finally when we apply the similarity transformation algorithm as in equations from (21) to (24)
 	to right (or left) solvent form we get:
 	\begin{equation*}
 	R_1={\left(
 		\begin{array}{cc}
 		-5.9574 & 0.2553 \\
 		-0.3404 & -8.0426 \\
 		\end{array}
 		\right)},\hspace{.5cm}
 	R_2={\left(
 		\begin{array}{cc}
 		-4.9412 & 0.2941 \\
 		-0.4118 & -7.0588\\
 		\end{array}
 		\right)}
 	\end{equation*}
 	\begin{equation*}
 	R_3={\left(
 		\begin{array}{cc}
 		0.000 & 1 \\
 		-3.25 & -2 \\
 		\end{array}
 		\right)}
 	\end{equation*}
 \end{example}

 \subsubsection{Reformulation of the Block Horner method}
 An alternative form of the previous algorithm (under matrix and algebraic manipulations is:
 \begin{table}[H]
 	\begin{center}
 		{\begin{tabular}{cccc}
 				$B_0(k)=A_0=I_m$\\
 				$B_1(k)=B_0(k) A_1+B_0 (k)X(k)$\\
 				$...$\\
 				$B_{l-1}(k)=B_0(k) A_{l-1}+B_{l-2}(k)X(k)$\\
 				$O_m=B_0 (k) A_l+B_{l-1} (k)X(k)$ \vspace{-0.3cm}
 		\end{tabular}}
 	\end{center}
 \end{table}
 \begin{table}[H]
 	\begin{center}
 		{\begin{tabular}{cc}
 				$B_{l-1}(k)=A_{l-1}+...+A_1 X^{l-1}(k)+B_0 X^l (k)$\\
 				$\Rightarrow B_{l-1}(k)=[A_R (X_k )-A_l] X_k^{-1}$ \hspace{0.75cm} \\
 				$\Rightarrow (B_{l-1}(k))^{-1}=X_k[A_R (X_k)-A_l ]^{-1}$ \vspace{-0.3cm}
 		\end{tabular}}
 	\end{center}
 \end{table}
 \hspace{-0.5cm} Finally we obtain the following iterative formula: $(k=0,1,...)$
 \begin{equation}
 X_{k+1}=-(B_{l-1}(k))^{-1} A_l=X_k [A_l-A_R (X_k )]^{-1} A_l \\
 \end{equation}
 
 \begin{algorithm} 
 	Enter the degree and the order  $m,l$ \\
 	Enter the matrix polynomial coefficients $A_i\in R^{m\times m}$ \\
 	$X_0\in R^{m\times m}=$ initial guess; \\
 	Give some small $\eta$ and ($\delta =$initial start)$>\eta$ \\
 	$k=0$ \\ [0.35cm]
 	\textcolor[rgb]{0.00,0.00,1.00}{While}$\delta \geq \eta$

 	\begin{table}[H]
 		\begin{center}
 			{\begin{tabular}{cccc}
 					$X_{k+1}=X_k [A_l-A_R{X_k }]^{-1}A_l  ;$\\
 					$\delta =100.\displaystyle{\frac{\|X_{k+1}-X_k\|}{\| X_k\|}} ;$\\
 					$X_k\leftarrow X_{k+1};$\\
 					$k=k+1;$\\
 			\end{tabular}}\vspace{-0.4cm}
 		\end{center}
 	\end{table}
 	
 \end{algorithm}
 
 \textbf{Convergence condition:} Using  equations (44) we obtain the conditions for the the algorithm to converge to the  solution. \\ [0.3cm]
 1. \emph{Upper bound}\\
 \begin{equation*}
 \mbox{eq(45)} \Leftrightarrow X_{k+1}-X_k=X_{k+1} A_l^{-1} A_R (X_k ) \hspace{1.5cm}
 \end{equation*}
 \begin{equation*}
 \mbox{eq(45)} \Leftrightarrow \|X_{k+1}-X_k\|= \|X_{k+1} A_l^{-1} A_R (X_k )\| \hspace{.75cm}
 \end{equation*}
 \begin{equation*}
 \mbox{eq(45)} \Leftrightarrow \|X_{k+1}-X_k \|\leq \|X_{k+1}\|.\|A_l^{-1}\|.\|A_R {X_k}\|
 \end{equation*}
 \begin{equation*}
 \mbox{eq(45)} \Leftrightarrow \frac{\|X_{k+1}-X_k \|}{\left(
 	\begin{array}{c}
 	\|X_{k+1}\|.\|A_l^{-1} \|
 	\end{array}
 	\right)}\leq \|A_R (X_k)\| \hspace{1cm}
 \end{equation*}
 Now if $X_k$  tends to constant matrix $\|X_k \|\rightarrow M$ as $k \rightarrow \infty$ and $\|X_k^{-1} \|\rightarrow N$
 with $\|A_l^{-1}\|=\gamma$ and $\|A_l \|=\delta$  then:\\
 \begin{equation*}
 \lim_{k\rightarrow \infty}\|A_R (X_k )\|\geq \frac{\|X_(k+1)-X_k \|}{\left(
 	\begin{array}{c}
 	\|X_(k+1)\|.\|A_l^{-1}\|
 	\end{array}
 	\right)}
 \end{equation*}
 \begin{equation}
 \Rightarrow
 \lim_{k\rightarrow \infty}\|A_R (X_k )\| \geq \frac{\xi_k}{\gamma.M}\\
 \end{equation}
 
 2. \emph{Lower bound} \\
 \begin{equation*}
 A_R(X_k)=A_l[I-X_{k+1}^{-1}X_k]=A_l (X_{k+1})^{-1}[X_{k+1}-X_k ]
 \end{equation*}
 \begin{equation*}
 \Rightarrow \|A_R (X_k )\|\leq \|A_l \|.\|(X_{k+1})^{-1}\|.\|X_{k+1}-X_k\|
 \end{equation*}
 \begin{equation}
 \Rightarrow \lim_{k\rightarrow \infty}\|A_R (X_k )\| \leq \delta.N\xi_k\\
 \end{equation}
 From eq. (45) and (46) we obtain:
 \begin{equation}
 \frac{\xi_k}{\gamma.M} \leq \lim_{k\rightarrow \infty}\|A_R (X_k )\| \leq \delta.N\xi_k
 \end{equation}
 Finally if the matrix $X_k$  tends to constant matrix $X_k\rightarrow S$ and $(A_l-A_R (X_k))$  is nonsingular matrix then $S$
 is a solvent of the matrix polynomial $A_R (S)=O_m$.\\ [.5cm]
 \textbf{Convergence Type:} To get the convergence type we should evaluate a ratio relationship between any two successive differences.
 \begin{equation}
 X_{k+1}-S=X_k ([A_l-A_R (X_k )]^{-1} A_l-I)+X_k-S
 \end{equation}
 
 We define $F(X_k )=([A_l-A_R (X_k )]^{-1} A_l-I)$   then we have:
 \begin{equation}
 {\|X_k-S\|-\|X_k F(X_k )\|\leq \|X_{k+1}-S\| \leq \|X_k-S\|+\|X_k F(X_k )\|}
 \end{equation}
 We know that:
 \begin{equation}
 \lim_{k\rightarrow \infty}\|X_k F(X_k )\|=\lim_{k\rightarrow \infty} \Delta_k=\xi
 \end{equation}
 from equations (49) and (50) we deduce that:
 \begin{equation*}
 1-\frac{\Delta_k}{\|X_k-S\|} \leq \frac{\| X_{k+1}-S\|}{\|X_k-S\|} \leq 1+\frac{\Delta_k}{\|X_k-S\|}
 \end{equation*}
 Finally:\\
 \begin{equation}
 \lim_{k\rightarrow\infty}\frac{\| X_{k+1}-S\|}{\|X_k-S\|}=1\\ \vspace{0.3cm}
 \end{equation}
 
 \begin{example} consider the following matrix polynomial with repeated spectral factor:
 	\begin{equation*}
 	A(\lambda)=\left[
 	\begin{array}{c}
 	\lambda I-{\left(
 		\begin{array}{cc}
 		-7.1230 & -6.3246 \\
 		5.9279 & 5.1230\\
 		\end{array}
 		\right)}
 	\end{array}
 	\right]
 	^2=A_0 \lambda^2+A_1 \lambda+A_2 \hspace{0.3cm}
 	\end{equation*}
 	With
 	\begin{equation*}
 	A_0={\left(
 		\begin{array}{cc}
 		1 & 0 \\
 		0 & 1\\
 		\end{array}
 		\right)}, \hspace{0.8cm}
 	A_1={\left(
 		\begin{array}{cc}
 		14.2461 & 12.6493 \\
 		-11.8557 & -10.2461\\
 		\end{array}
 		\right)}
 	\end{equation*}
 	\begin{equation*}
 	A_2={\left(
 		\begin{array}{cc}
 		13.2461 & 12.6493 \\
 		-11.8557 & -11.2461\\
 		\end{array}
 		\right)}\\ \vspace{.3cm}
 	\end{equation*}
 	Find $X$ such that $A_R (X)=O_2$
 	\begin{equation*}
 	A_R (X)=A_0 X^2+A_1 X+A_2
 	\end{equation*}
 	If we apply the Block Horner algorithm we find \vspace{0.2cm}
 	\begin{equation*}
 	X_1={\left(
 		\begin{array}{cc}
 		-2.7323 & -1.8068 \\
 		1.6798 & 0.7521\\
 		\end{array}
 		\right)}, \hspace{0.8cm}
 	X_2={\left(
 		\begin{array}{cc}
 		-11.5138 & -10.8424 \\
 		10.1759 & 9.4939\\
 		\end{array}
 		\right)}\vspace{0.2cm}
 	\end{equation*}
 \end{example}

 \begin{remark} The Proposed Horner algorithm finds the whole set of spectral factors if it exists, even if there is no dominance factor among them.
 \end{remark}

 \subsubsection{Crossbred Newton$\_$Horner method}
 In order to accelerate the Block Horner method we make a crossbred (Hybrid)  Generalized Newton algorithm which is very fast due to its restricted local nature (i.e. Quadratic convergence). 
 
 \begin{figure}[H]
 	\centering
 	\includegraphics[width=7cm,height=.9cm]{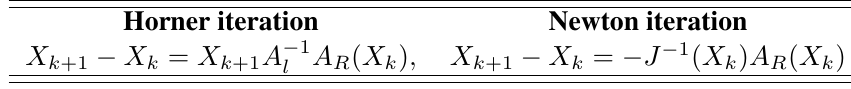}\vspace{-.2cm}
 \end{figure}
 
 \hspace{-0.6cm} Combine them we get:
 \begin{equation}
 X_{(k+1}=X_k+(X_k-J^{-1}(X_k)A_R (X_k)) {A_l}^{-1}A_R(X_k)
 \end{equation}
 where $J(X_k)$ is the Frechet differential.

 \begin{definition}
 	Let $B_1$ and  $B_2$ be Banach spaces and $A_R$ a nonlinear operator from $B_1$to $B_2$. If there exists a linear operator
 	$L$  from $B_1$ to $B_2$  such that:
 	\begin{tabbing}
 		\begin{tabular}{c}
 			$B_1\rightarrow B_2$\\
 			$H \rightarrow L(X+H)$\\
 			Where:\hspace*{8cm} \\
 			$\|A_R (X+H)-A_R (X)-L(X+H)\|=O(\|H\|)$\\
 			$X,H\in B_1 \hspace{0.8cm}A_R (X),L(X+H)\in B_2$\\
 		\end{tabular}
 	\end{tabbing}
 	Then $L(X+H)$ is called the Frechet derivative of $A_R$ at $X$ and sometimes is written  $dA_R (X,H)$. Also is read the
 	Frechet derivative of  $A_R$ at $X$ in the direction $H$. And  $J(X_k ).H=L(X+H)$   
 \end{definition}

 Algorithm\\
 Begin
 \begin{figure}[H]
 	\includegraphics[width=8.8cm,height=7cm]{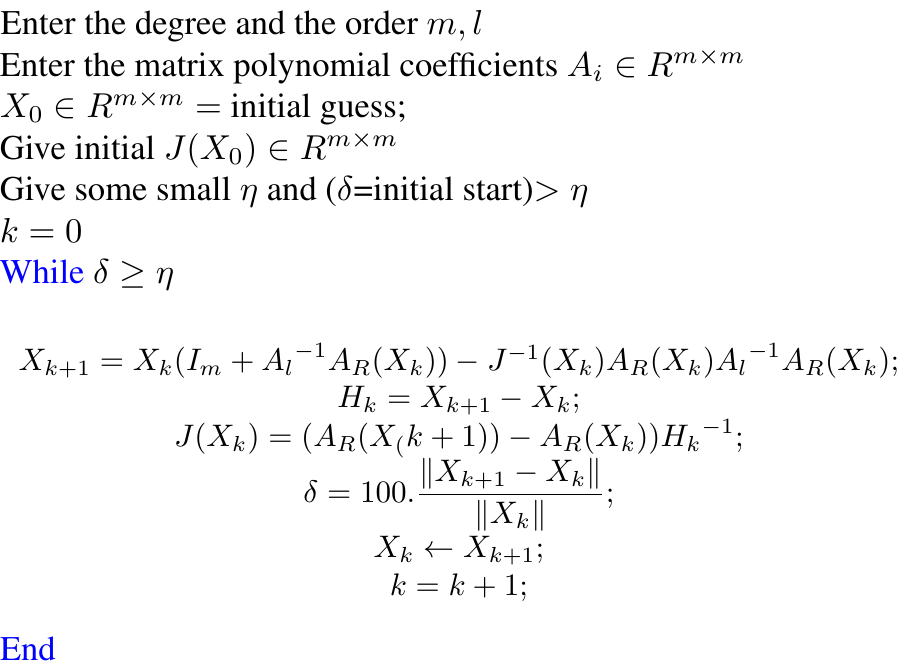}\vspace{-.3cm}
 \end{figure}

 \subsubsection{Two stage Block Horner algorithm}
 To accelerate the block Horner algorithm we use now a two stage Newton like iteration. Now by using
 theorem \ref{theorem1} we obtain:\vspace{-0.1cm}
 \begin{table}[H]
 	\centering
 	{\begin{tabular}{c}
 			$A_R(X)=(X-\Theta)(X^{l-1}+B_1X^{l-2}+...+B_{l-1})+B_l$\\
 			$\Rightarrow A_R (X)=(X-\Theta)Q(X)+B_l$\\
 			$\Rightarrow X-\Theta=(A_R(X)-B_l) Q^{-1}(X)$ \vspace{-0.15cm}
 	\end{tabular}}
 \end{table}
 \hspace*{-0.6cm}where $Q(X)=X^{l-1}+B_1 X^{l-2}+...+B_{l-1}$     and  $A_R (\Theta)=B_l$ Now if $\Theta$ is a solvent then $B_l=O_m$
 If now assume that $\Theta=X_{k+1}$  is a solvent to the matrix polynomial $A_R$  then $A_R (X_{k+1})=O_m$
 and $X_{k+1}=X_k-A_R(X_k) Q^{-1}(X_{k+1})$. Set also $Q(X)=(X-\Theta)(X^{l-1}+C_1 X^{l-2}+...+C_{l-2})+C_{l-1}$
 From the Horner scheme we can evaluate both $B_i$ and $C_i$  recursively:
 \begin{table}[H]
 	\hspace*{-0.25cm}
 	{\begin{tabular}{cc}
 			$B_0=I_m$ & \hspace{0.2cm} \\
 			$B_1=A_1+B_0 X$ & $C_0=I_m$\\
 			$B_2=A_2+B_1 X$ & $C_1=B_1+C_0 X$\\
 			$...$ &  $C_2=B_2+C_1 X$\\
 			$B_{l-1}=A_{l-1}+B_{l-2} X$ & $...$\\
 			$O_m =B_l=A_l+B_{l-1}X=A_R(X)$ &  $C_{l-1}=B_{l-1}+C_{l-2} X=Q(X)$\\ \\
 			\textbf{After iterating the last equation we get}: & \textbf{After iterating the last equation we get}:\\ \\
 			$B_l(k)=A_l+B_{l-1}(k)X_k$ &  $C_{l-1}(k)=B_{l-1}(k)+C_{l-2}(k)X_k$\\
 			$B_l(k)=A_R (X_k)$ &  $C_{l-1}(k)=Q(X_k)$\\
 	\end{tabular}}
 \end{table}


 \hspace{-.4cm}\textbf{Algorithm:} \vspace{0cm}
 \begin{figure}[H]
 	\includegraphics[width=8cm,height=8.4cm]{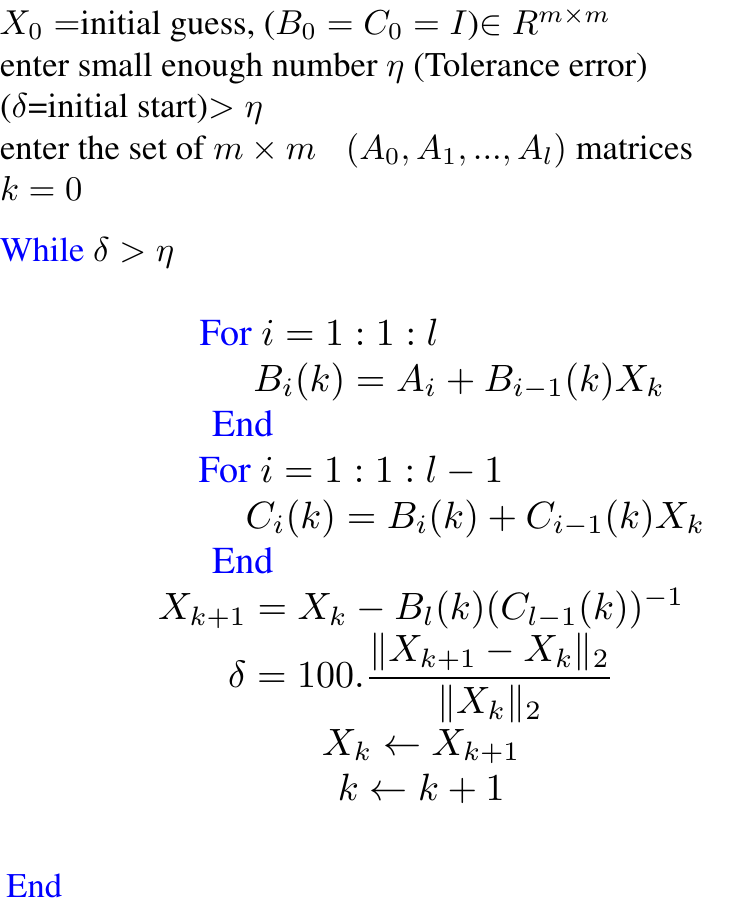}\vspace{-.2cm}
 \end{figure}

 \begin{remark} this two stage algorithm gathers the two advantages of Horner sachem and Newton algorithm because it is
 	nested programed nature, large sense independence on initial conditions and faster in execution due to the likeness
 	or the conformity to Newton method. 
 \end{remark}
 
 \begin{example} Given the following matrix polynomial
 	\begin{equation*}
 	A_R (X)=A_0 X^3+A_1 X^2+A_2 X+A_3
 	\end{equation*}
 	Where:
 	\begin{equation*}
 	A_0={\left(
 		\begin{array}{cc}
 		1 & 0 \\
 		0 & 1 \\
 		\end{array}
 		\right)}, \hspace{0.5cm}
 	A_1={\left(
 		\begin{array}{cc}
 		12.8793 & -0.4881 \\
 		-2.0989 & 15.1207 \\
 		\end{array}
 		\right)}\vspace{0.2cm}
 	\end{equation*}
 	\begin{equation*}
 	A_2={\left(
 		\begin{array}{cc}
 		56.5645 & -8.7887 \\
 		10.2686 & 55.9659\\
 		\end{array}
 		\right)}, \hspace{0.5cm}
 	A_3={\left(
 		\begin{array}{cc}
 		95.9331 & -37.5549 \\
 		160.9539 & -6.0938 \\
 		\end{array}
 		\right)}\vspace{0.2cm}
 	\end{equation*}
 	We apply the two stage Horner algorithm and after 15 iterations we get: \\ [0.2cm]
 	\begin{equation*}
 	X_0={\left(
 		\begin{array}{cc}
 		5.2114 & 4.8890 \\
 		2.3159 & 6.2406\\
 		\end{array}
 		\right)}, \hspace{0.5cm}
 	X_1={\left(
 		\begin{array}{cc}
 		-3.0729 & 1.4058 \\
 		-4.6569 & 1.0730\\
 		\end{array}
 		\right)}\vspace{0.2cm}
 	\end{equation*}
 	With
 	\begin{equation*}
 	A_R(X_{15})={\left(
 		\begin{array}{cc}
 		-0.0081 & 0.0106 \\
 		0.0265 & 0.0145\\
 		\end{array}
 		\right)}
 	\end{equation*}
 \end{example}

 \subsubsection{Reformulation of the two stage Block Horner method}
 After back substitution of the nested programmed scheme and accumulation we obtain:
 \begin{table}[H]
 	{\begin{tabular}{c}
 			$B_l(k)=A_R (X_k )$ \hspace{0.1cm} \mbox{and} \\
 			$C_{l-1}(k)=l{X_k}^{l-1}+{l-1}A_1 {X_k}^{l-2}+...+A_{l-1}=\Delta (X_k )$ \vspace{-0.3cm}
 	\end{tabular}}
 \end{table}
 \hspace*{-0.4cm} The two stage Block Horner Varian algorithm can be obtained when we use the compact forms of the matrices
 $B_l (k)$  and $C_{l-1}(k)$ in term of   $A_R$     lead us to Newton like iterated process.\\  [.5cm]
 
 \hspace{-.5cm}\textbf{Algorithm:} \vspace{-.2cm}
 \begin{figure}[H]
 	\includegraphics[width=10cm,height=8.5cm]{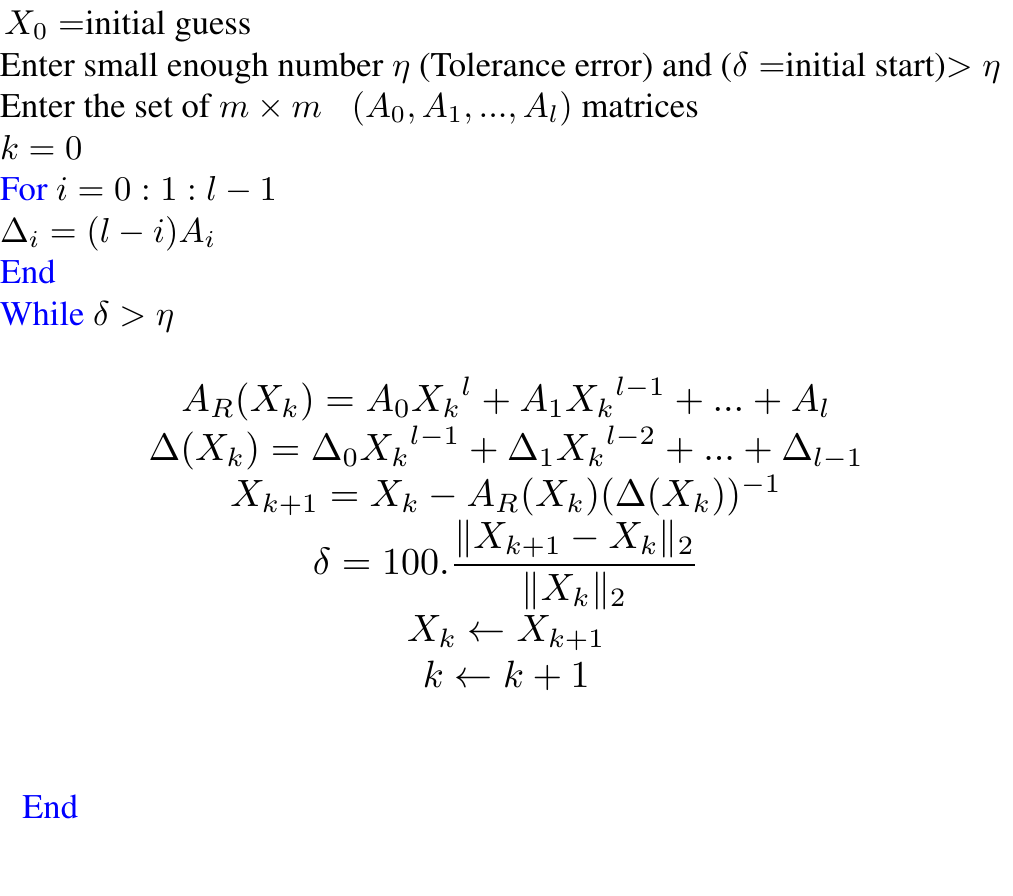}\vspace{-1cm}
 \end{figure}

 \subsection{Comments}

 \begin{itemize}
 	\item A numerical method for solving a given problem is said to be local if it is based on local (simpler)
 	model of the problem around the solution. From this definition, we can see that in order to use a local method, one has to provide an initial approximation of the solution. This initial approximation can be provided by a global method. As shown in Dahimene [\cite{09}], local methods are fast converging while global ones are quite slow. This implies that a good strategy is to start solving the problem by using a global method and then refine the solution by a local method.
 	
 	\item The proposed hybrid or two stage Block-Horner's algorithm converges rapidly as it performs a recursive iteration
 	and is easily implemented in a digital computer. Horner's algorithm could be used for evaluation of solvents of a matrix polynomial, but this method depends largely upon the initial guess even in some cases the initial value of $X_k$  is randomly chosen. Hence in sometimes it is very hard to find suitable solutions. Our  $Q.D.$ algorithm is numerically more stable and its initial starting values are well defined and evaluated.
 	
 	\item The complete program starts with the $Q.D.$ algorithm. It is then followed by a refinement of the right factor
 	by Horner's algorithm. After deflation, Horner’s algorithm is again applied using the next $Q$ output from the
 	$Q.D.$ algorithm and the process is repeated until we find a linear term. The above  process can
 	be applied only to polynomial matrices that satisfy the conditions of theorem (i.e. complete right and left
 	factorization and complete dominance relation between solvents). 
 	
 	

 	\item Many research works have been done on the spectral decomposition for matrix polynomials to achieve complete factorization and reconstruction of the block roots using algebraic and geometric numerical approaches, but (to our knowledge) nothing has been done for Block-Horner's algorithm and/or Block-$Q.D.$ algorithm.
 \end{itemize}
 
 \section{Application in control engineering}
 The system under examination is a power plant gas turbine (GE MS9001E) with single shaft, used as an electricity generator, installed in power station unit Sonelgaz at M'SILA, Algeria. The dynamic model of this gas turbine obtained via MIMO Recursive Least square estimator, using experimental inputs/outputs data acquired on-site and the obtained model is of order $n$=6 with two  inputs: (Output Pressure Compressor (OPC), and  Output Temperature Compressor (OTC)), and two outputs: (Exhaust Temperature and Rotor Speed) [\cite{44}]. 
 In figure 2, the fundamental components of the system under study are given .
 \begin{figure}[H]
 	\hspace{-0.4cm}
 	\includegraphics[scale=.57]{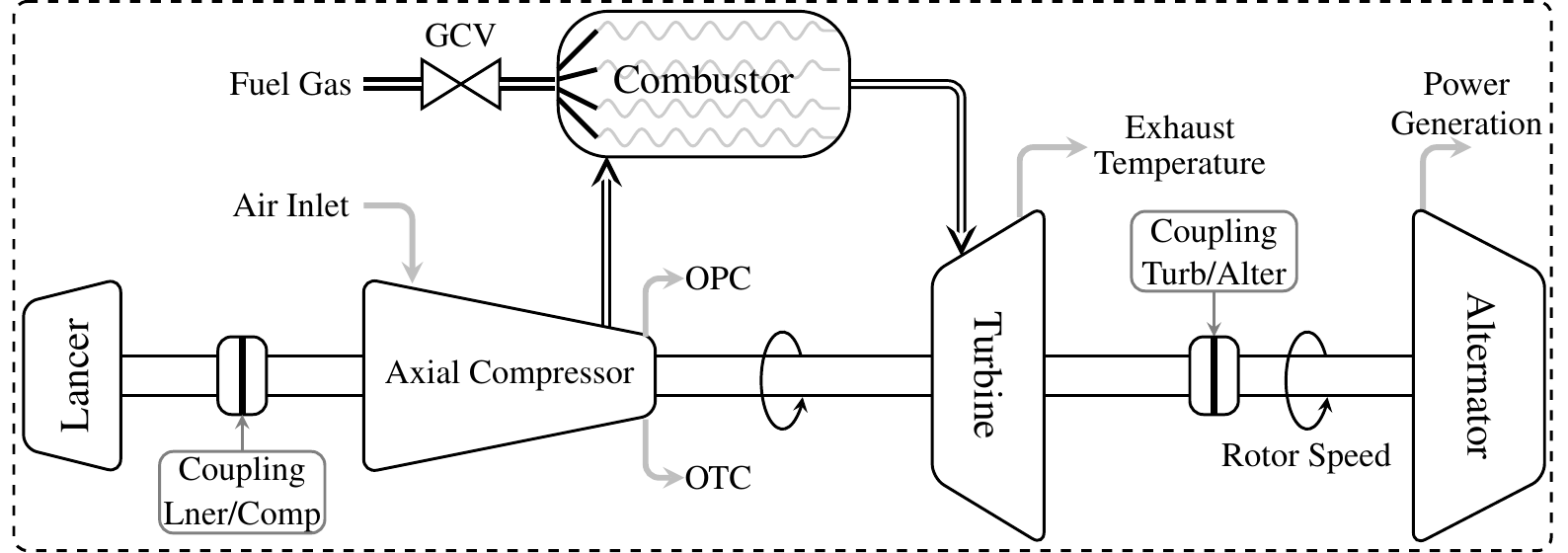}\\
 	\caption{schematic }
 \end{figure}
 The dynamic model of this power plant gas turbine is a linear time invariant multi input multi output system,described by a set of high degree coupled vector differential equations with matrix constant coefficients(or a matrix transfer function). In our case the relationship between the input and output is a ratio of two matrix polynomials, expressed as a right (or left) matrix fraction description (RMFD or LMFD):
 \begin{equation}
 \left\{
 \begin{array}{ll}
 H(\lambda)= N_{R}(\lambda){D_{R}}^{-1}(\lambda) \\
 \hspace{0.85cm}={D_{L}}^{-1}(\lambda)N_{L}(\lambda)
 \end{array}
 \right. 
 \end{equation}
 
 where:$N_{R},D_{R},N_{L} \mbox{and}  D_{L}$ are matrix polynomials and $\lambda$ stands for the $\left(\displaystyle{\frac{d}{dt}}\right)$ operator.  see  [\cite{01}],[\cite{02}] and [\cite{43}] and the reference therein .
 The obtained $\lambda$-matrix transfer function of the power plant gas turbine system is:
 \begin{equation*}
 H(\lambda)=N(\lambda){D(\lambda)}^{-1}=\left(
 \begin{array}{cc}
 H_{11}(\lambda) &H_{12}(\lambda) \\
 H_{21}(\lambda)& H_{22}(\lambda)\\
 \end{array}
 \right)
 \end{equation*}
 Where
 \begin{figure}[H]
 	\hspace{-0.3cm}
 	\includegraphics[width=9.1cm,height=4.7cm]{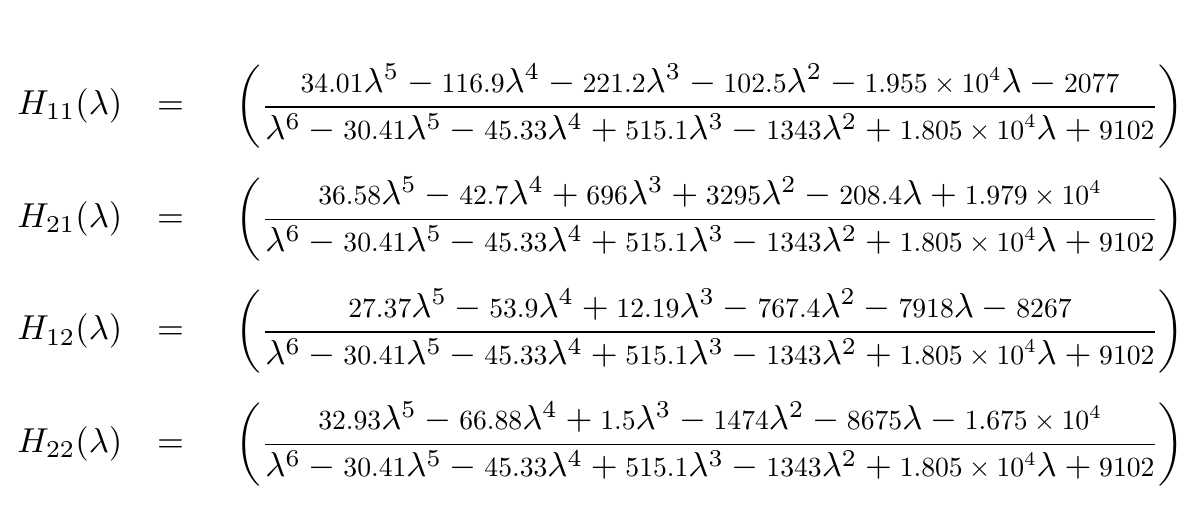}
 \end{figure}
 
 We try to decouple the power plant gas turbine dynamic model. Let us first factorize (\textbf{\emph{decompose}}) the numerator matrix polynomial $N(\lambda)$  into a complete set of spectral factors, then we use those block zeros into the denominator $D(\lambda)$ via state feedback. Hence the decoupling objectives are achieved.\\ [.15cm]
 Consider the square matrix transfer function:
 \begin{equation*}
 \begin{array}{ccc}
 H(\lambda)&=&N(\lambda)D^{-1} (\lambda)=
 \left(
 \begin{array}{cc}
 \sum\limits_{i=0}^{k} N_i \lambda^i \\
 \end{array}
 \right)
 \left(
 \begin{array}{cc}
 \sum\limits_{i=0}^{l} D_i \lambda^i
 \end{array}
 \right) ^{-1}\\
 &=&(N_k \lambda^k+...+N_1 \lambda+N_0 ) (D_l \lambda^l+...+D_1 \lambda+D_0 )^{-1}
 \end{array}
 \end{equation*}
 with: \\ \vspace{0.2cm}
 $D_l=I$ is an $ m\times m$ identity matrix and \\
 $N_i\in R^{m\times m},(i=0,1,...,k)$ \\
 $D_i\in R^{m\times m},(i=0,1,...,l)$, $l>k$ \\ [.25cm]
 Assume that $N(\lambda)$  can be factorized into $k$ Block zeros and $D(\lambda)$can be factorized into $l$ Block roots (\emph{using one of the proposed algorithms}): \\
 \begin{equation}
 N(\lambda)=N_k (\lambda I-Z_1 )...(\lambda I-Z_k ) \vspace{-0.15cm}
 \end{equation}
 \begin{equation}
 D(\lambda)=(\lambda I-Q_1 )...(\lambda I-Q_l ).\hspace{0.15cm}
 \end{equation}
 
 The matrix transfer function can be written : $H(\lambda)=C(\lambda I-A)^{-1}B$ . Also  via the use of state feedback the control law becomes a state dependent and be rewritten as  $u(t)=-K.X(t)+F.r(t)$. Hence we obtain the following closed loop system:
 \begin{equation*}
 (H(\lambda))_{closed}=C(\lambda I-A+BK)^{-1} BF=N(\lambda) D_d^{-1}(\lambda)F
 \end{equation*}
 where:  $D_d (\lambda)=(\lambda I-Q_{d1} )...(\lambda I-Q_{dl} )$  and $Q_{di}$: are the desired spectral factors to be placed
 \begin{equation*}
 H(\lambda)_{closed}=N(\lambda) D_d^{-1}(\lambda)F
 \end{equation*}
 Hence, the closed loop matrix transfer function is of the form:
 \begin{equation}
 H(\lambda)_{closed}=N_k (\lambda I-Z_1 )...(\lambda I-Z_k )(\lambda I-Q_{dl} )^{-1}...(\lambda I-Q_{d1} )^{-1}F
 \end{equation}
 In order to achieve perfect block decoupling we choose:\\\\
 $Q_{d1}=N_k J_1N_k^{-1},...,Q_{d(l-k)} =N_k J_{(l-k)}N_k^{-1}$\\
 $Q_{d(l-k+1)}=Z_1,...,Q_{dl}=Z_k$ \\
 $J_i=diag(\lambda_{i1},...,\lambda_{im}),\hspace{0.5cm} F=(N_k)^{-1}$\\ \\
 Now by assigning those block roots the system is decoupled and the closed loop matrix transfer function is:
 \begin{equation}
 H(\lambda)_{closed}=(\lambda I-J_{1} )^{-1}...(\lambda I-J_{l-k} )^{-1}
 \end{equation}
 Now we should construct the numerator and denominator matrix polynomials of the gas turbine system from the matrix transfer function (see [\cite{13},\cite{14} and \cite{27}]):
 \begin{equation*}
 \left\{
 \begin{array}{ccc}
 D(\lambda)&=&D_3\lambda^3+D_2\lambda^2+D_1\lambda+D_0, \hspace{1cm} D_3=I_2\\
 N(\lambda)&=&N_3\lambda^3+N_2\lambda^2+N_1\lambda+N_0, \hspace{1cm} N_3=O_2
 \end{array}
 \right.
 \end{equation*}
 where:
 
 \begin{equation*}
 \left( \begin{matrix}
 D_0 \\ \\ D_1\\ \\ D_2
 \end{matrix} \right )=
 \left (
 \begin{matrix}
 -37.0170&28.2888\\
 -223.8750&-74.7887\\
 34.8029&20.9798\\
 -280.2609&-216.8345\\
 -14.0378&-7.5183\\
 -12.3898&-16.3740
 \end{matrix}	
 \right)
 \end{equation*}
 
 \begin{equation*}
 \begin{array}{ccc}
 \left(\begin{array}{c}
 N_0\\ [.2cm] N_1\\ [.2cm] N_2\end{array}\right) =
 {
 	\left(
 	\begin{array}{cc}
 	211.7886&61.4727\\
 	331.6250&199.1758\\
 	100.8960&74.4572 \\
 	148.1818&120.4265\\
 	34.0105&27.3669\\
 	36.5764&32.9324\\
 	\end{array}
 	\right)}
 \end{array}
 \end{equation*}
 Let we decompose the numerator and denominator matrix polynomials and reconstruct their block roots:\\ [.2cm]
 $N(\lambda)=N_2(\lambda I-Z_1)(\lambda I-Z_2) $\\ [0.2cm]
 and \\ [0.2cm]
 $D(\lambda)=(\lambda I-Q_1)(\lambda I-Q_2)(\lambda I-Q_3)$ \\ [0.2cm]
 The Block spectral factors are approximately computerized with a residual normed tolerance error given by:\\ [0.3cm]
 $\hspace*{2cm} \displaystyle{\varepsilon_i=\frac{\|{Z_i}^*\|-\|Z_i\|}{\|{Z_i}^*\|}} \hspace{.15cm} i=1,2.$ \\ [0.2cm]
 and \\ [0.2cm]
 $\hspace*{2cm} \displaystyle{\xi_i=\frac{\|{Q_i}^*\|-\|Q_i\|}{\|{Q_i}^*\|}}  \hspace{.2cm} i=1,2,3$\\ [0.35cm]
 
 \begin{remark} The last Block pole $Q_3$ can be constructed using the synthetic long division.
 	Figure (3) illustrates a comparison  between the proposed algorithms in term of the convergence speed and residual normed tolrance error.\\ \end{remark}
 
 The numerator block zeros are computed using the proposed algorithms compared to recent developed method called the generalized secant method [\cite{45}] which can factorize matrix polynomial into a complete set of block roots. Numerical results of the developed procedures as illustrated in [\cite{45}] give:
 \begin{equation*}
 N(Z_i)=O_2\Rightarrow
 \end{equation*}
 \begin{equation*}
 Z_1 = {\left(
 	\begin{array}{cc}
 	24.7235 & 23.1394 \\
 	-27.4494 & -24.9281 \\
 	\end{array}
 	\right)}, \hspace{.5cm}
 Z_2 = {\left(
 	\begin{array}{cc}
 	-18.5711 & -16.0841 \\
 	16.1166 & 13.4353\\
 	\end{array}
 	\right)}
 \end{equation*}
 \begin{figure}[H]
 	\centering
 	\includegraphics[width=8.7cm,height=7.5cm]{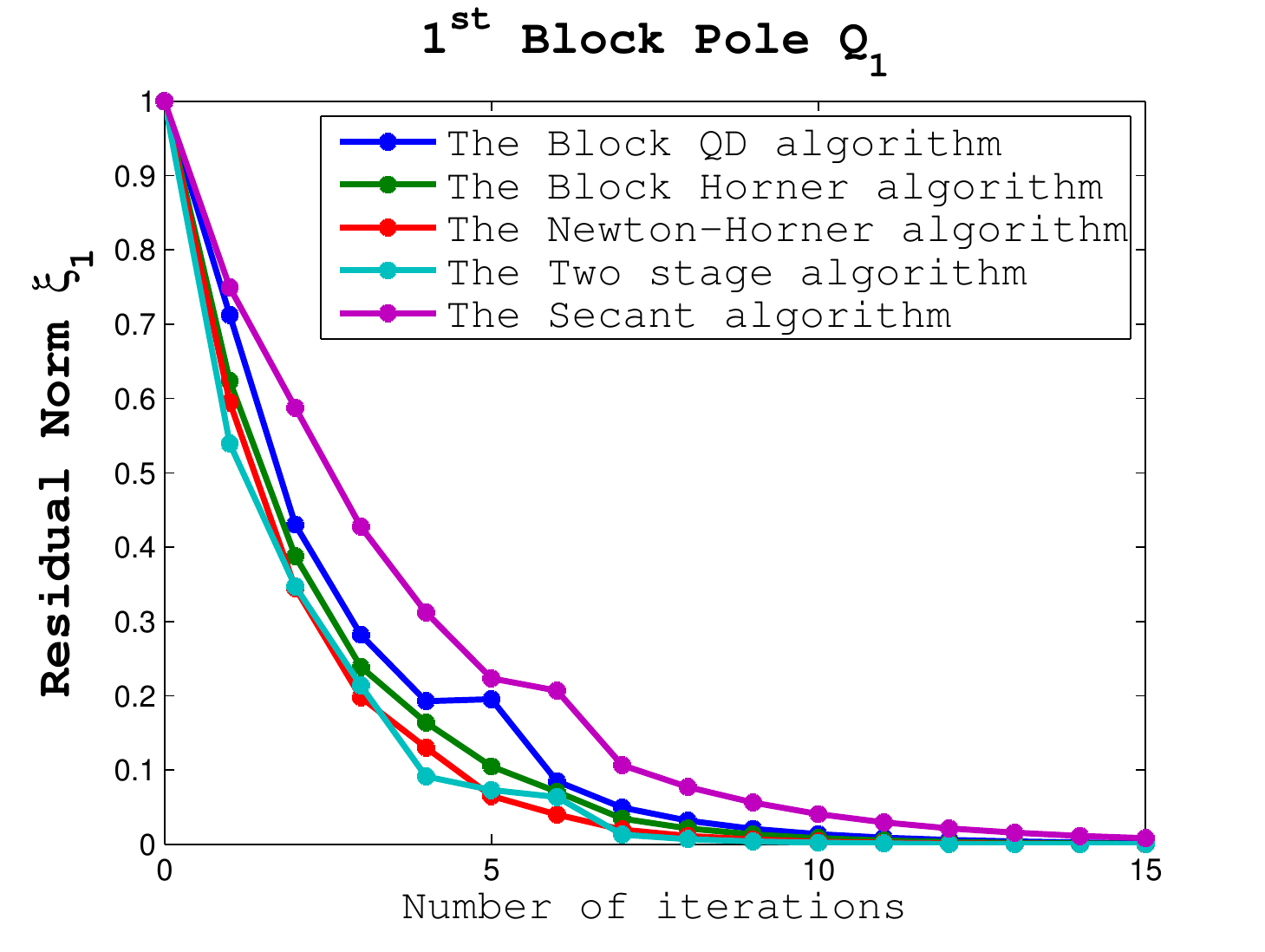}
 \end{figure}
 \begin{figure}[H]
 	\centering
 	\includegraphics[width=8.5cm,height=7.2cm]{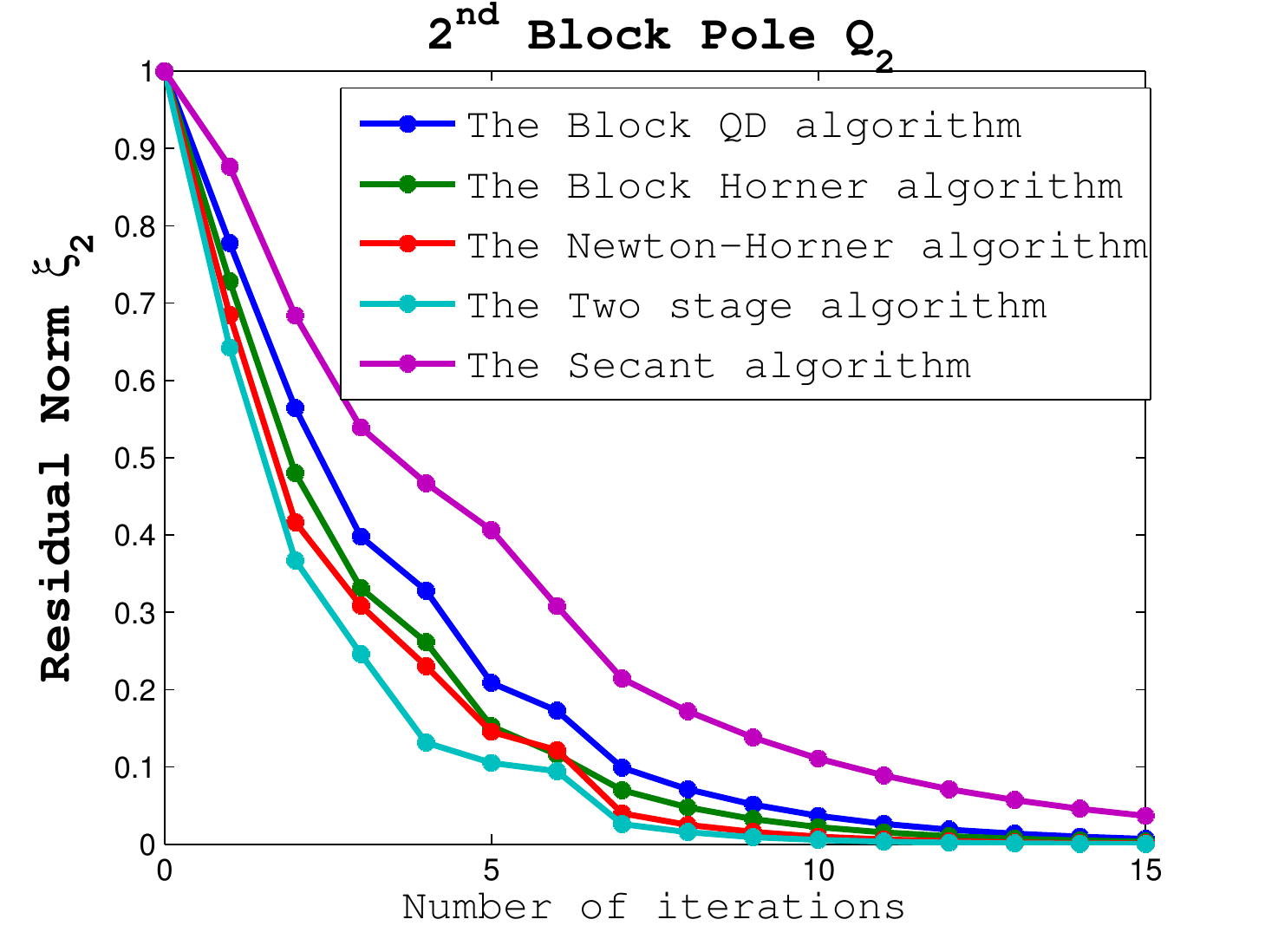}
 \end{figure}
 \begin{figure}[H]
 	\centering
 	\includegraphics[width=8.5cm,height=7.2cm]{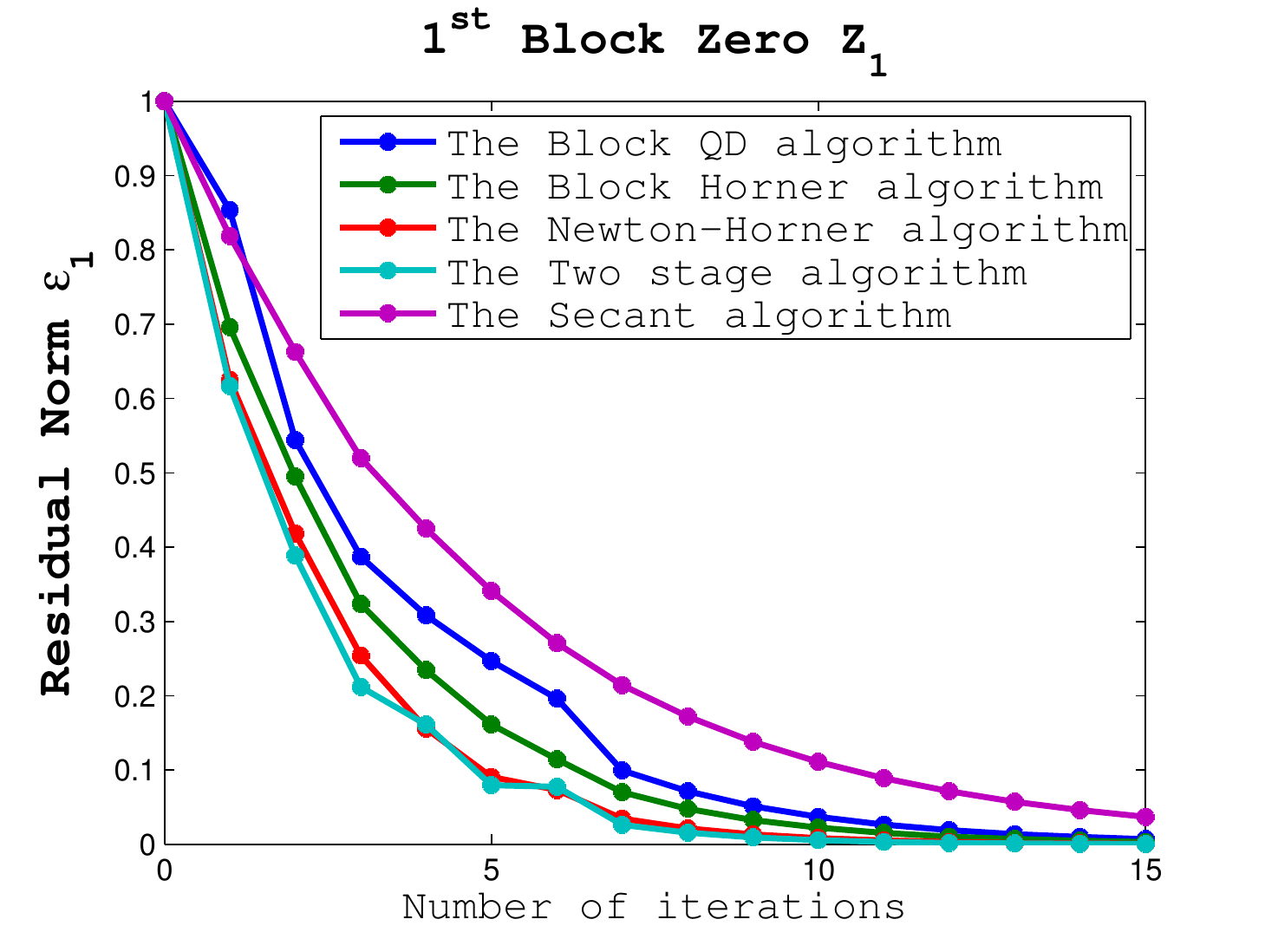}
 \end{figure}
 \begin{figure}[H]
 	\centering
 	\includegraphics[width=8.5cm,height=7.2cm]{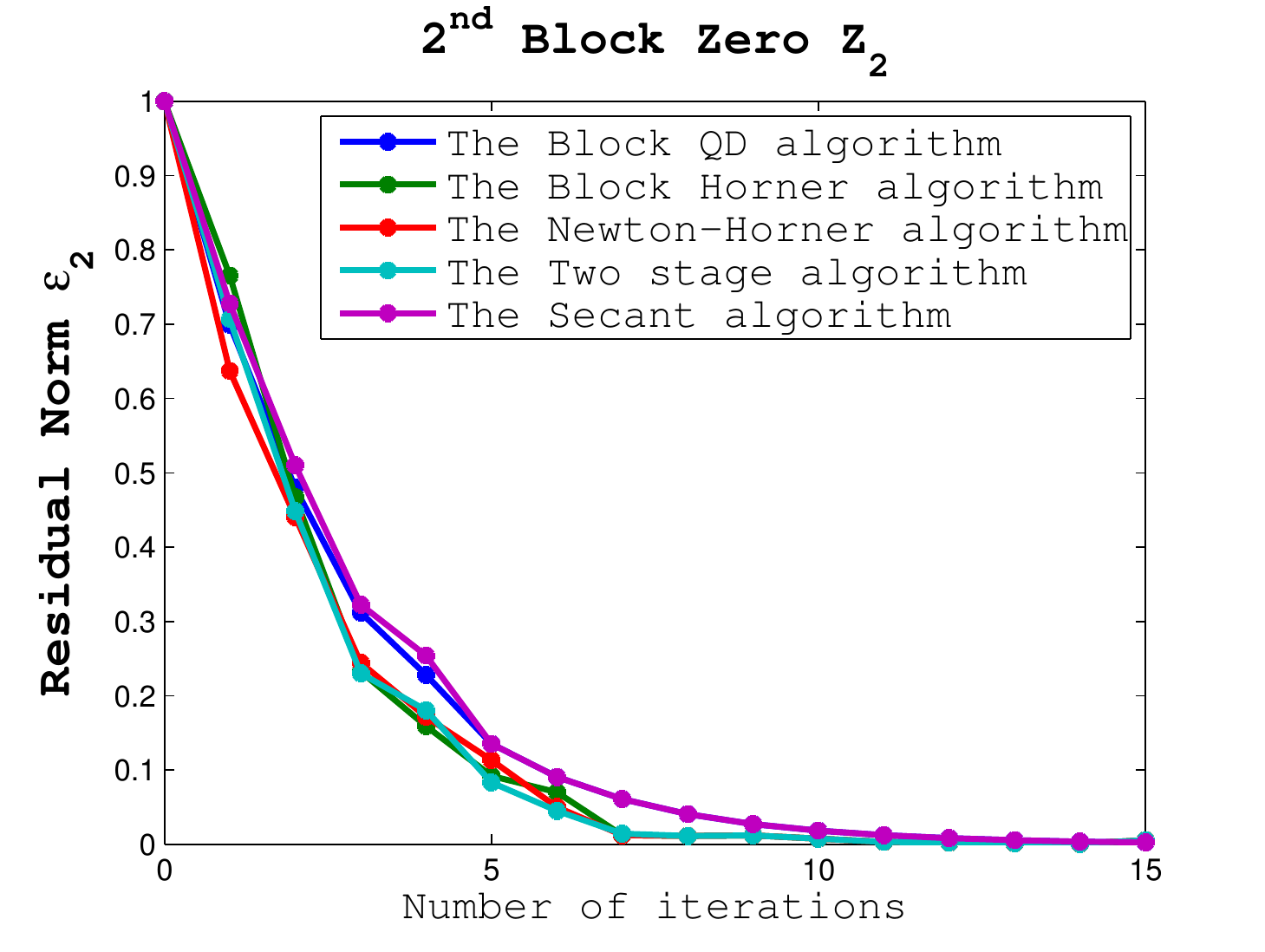}\\
 	\caption{The Residual Error Norm comparison study}
 \end{figure}
 
 Although the Block Newton method aims to improve the convergence speed over the Block Horner method, it cannot always achieve this goal. The Newton-Horner's method converges quadratically due to its conformity to Newton method. As a consequence, the number of significant values is roughly doubled every iteration, provided that $X_i$ is close to the root $(\varepsilon=0.521\times10^{-4})$.  The two stage algorithm is the best method of finding roots, it is simple and fast (at first seven iterations the average error becomes $\varepsilon=0.213\times10^{-3}$ and $\xi=0.341\times10^{-3}$). The only drawback of the two stage method is that it uses the matrix inversion, and partially is dependent on the initial guess. Our refinement algorithm  avoided this obstacle. As indicated in figure (2) the Q-D algorithm converges, but it is of global nature with no initial independence.  The global convergence characteristics of the secant method are poor, as indicated in this figure.\\ [0.2cm]
 The desired denominator is of third order written in the form:
 \begin{equation*}
 D_d (\lambda)=D_{d3} \lambda^3+D_{d2} \lambda^2+D_{d1} \lambda+D_{d0}
 \end{equation*}
 Using the prescribed decoupling algorithm we obtain:\\ [.2cm]
 $F={N_2}^{-1},\hspace{.05cm} J_1= {\left(\begin{array}{cc}-1 & 0\\0 & -2 \\ \end{array}\right)}$\\
 $Q_{d1}={N_2}^{-1}J_1{N_2}^{-1},\hspace{.05cm}  Q_{d2}=Z_1, \hspace{.05cm}  Q_{d3}=Z_2$
 \begin{equation*}
 \begin{array}{ccc}
 D_d (\lambda)&=&(\lambda I-Q_{d1})(\lambda I-Q_{d2})(\lambda I-Q_{d3})\\
 &=&I\lambda^3+D_{d2} \lambda^2+D_{d1}\lambda +D_{d0}
 \end{array}
 \end{equation*}
 Where:\\ [0.2cm]
 \begin{equation*}
 \begin{array}{ccc}
 D_{d2} &=& -(Q_{d1}+Q_{d2}+Q_{d3}) \\
 ~~ &=& {\left(\begin{array}{cc}-13.5596 & -14.6249\\21.7809 & 21.8999 \\\end{array}\right)} 
 \end{array}
 \end{equation*}
 \begin{equation*}
 \begin{array}{ccc}
 D_{d1}&=&(Q_{d1}Q_{d2}+Q_{d1}Q_{d3}+Q_{d2}Q_{d3})\\
 &=& {\left(\begin{array}{cc}-126.4282 & -121.5061\\161.6710 & 152.4741 \\\end{array}\right)}\\
 \end{array}
 \end{equation*}
 \begin{equation*}
 \begin{array}{ccc}
 D_{d0}&=&-(Q_{d1}Q_{d2}Q_{d3})\\
 &=& {\left(\begin{array}{cc}-178.9732 & -164.0512\\223.2851 & 202.6227 \\\end{array}\right)}
 \end{array}
 \end{equation*}
 The state feedback gain matrix of the Block controller form is obtained by see [\cite{42}],[\cite{43}]:
 \begin{equation*}
 K_{ci}=D_{di}-D_{i} \hspace{0.15cm} \mbox{With}\hspace{0.15cm} i=0,1,2  \hspace{0.15cm} \hspace{0.15cm} \mbox{and}\hspace{0.15cm} K_{c}=[ K_{c0}, K_{c1}, K_{c2}]
 \end{equation*}
 Now let we go back to original base by similarity transformation $T_c$ as found in [\cite{01}],[\cite{02}] and [\cite{43}]:
 \begin{equation*}
 {
 	\begin{array}{ccc}
 	K&=&K_{c} T_c\\
 	&=&\left(
 	\begin{array}{cccccc}
 	-0.7616&-3.2690&3.5737&-0.0716&-2.3462&1.4801\\
 	3.0439&5.4047&-2.3853&2.4117&4.2560&0.0494\\
 	\end{array}
 	\right)
 	\end{array}}
 \end{equation*}
 The new model of the decoupled system after state feedback is:
 \begin{equation*}
 A_d=(A-BK), \hspace{0.5cm}  B_d=BF, \hspace{0.5cm}  \mbox{and} \hspace{0.5cm} C_d=C
 \end{equation*}
 \begin{equation*}
 \begin{array}{ccc}
 H(\lambda)_{closed}&=&C(\lambda I-A+BK)^{-1} BF \hspace*{2cm} \\
 &=&N(\lambda) D_d^{-1}(\lambda)F=
 {\left(
 	\begin{array}{cc}
 	\displaystyle{\frac{1}{\lambda+1}}& 0 \\
 	0 & \displaystyle{\frac{1}{\lambda+2}}\\
 	\end{array}
 	\right)}
 \end{array}
 \end{equation*}
 
 Based on the results we deduce that the Block roots are well computed, both numerator and denominator matrix polynomials ($N(\lambda)$ and $D(\lambda)$) are perfectly decomposed using the proposed procedure.
 
 \subsection{Suggestions for further research}
 The results obtained during this research work arose many questions and problems which are subject for future research
 \begin{itemize}
 	\item Finding other globalization techniques for the Block-Horner's algorithm to avoid the local restriction and the problem of initial guess, so to arrive at very fast global nested program. Also exploring and 	extending other scalar numerical methods to factorize matrix polynomials. 	\item Both of The Block-Horner's algorithm and the Block-$Q.D.$ algorithm as used in our work converges to
 	factors of a matrix polynomial. By using the defined similarity transformations, we can derive the solvents. However, it would be convenient to have a global algorithm that converges rapidly and directly to all solvents.
 	\item If a column in the $Q.D.$ tableau converges, it implies that there exists a factorization of the matrix polynomial that splits the spectrum into a dominant set and a dominated one. If the system under consideration is a digital system, we know that the largest modulus latent roots affect the dynamic properties of the system. In such case, the $Q.D.$ algorithm can become a tool for system 	reduction (using the dominant mode concept).
 	\item The computational procedure for finding the solvents of a matrix polynomial with repeated block roots (solvents) and/or spectral factors need to be investigated further.
 \end{itemize}
 
 \section{Conclusion}
 In this paper we have introduced new numerical approaches for determining the complete sets of spectral factors and
 solvents of a monic matrix polynomial. For avoiding the initial guess we have proposed a systematic method for
 the Block-Horner's algorithm via a refinement of the Block-$Q.D.$ algorithm. At least three advantages are offeedr by
 the proposed technique: (i) an algorithm with global nature is obtained; hence there is no initial-guess problem
 during the whole procedure, (ii) high speed convergence to each solution and only a few iterations are required
 (iii) via the help of refinement and direct cascading, the algorithms are easily coupled together and the whole
 scheme is suitable for programming in a digital computer digital. The obtained solvents can be
 considered as a useful tool for carrying out the block partial fraction expansion for the inverse of a matrix
 polynomial. Those partial fractions are matrix transfer functions of reduced order linear systems such
 that the realization of them leads to block diagonal (block-decoupling) or parallel decomposed multivariable linear
 time invariant system. The dynamic properties of MIMO systems depend on block pole of its characteristic matrix polynomial. Therefore they can be used as tools for block-pole placement, block-system identification and block-model order reduction. In addition, the proposed
 method can be employed to carry out the block spectral factorization of a matrix polynomial for problems in optimal
 control, filtering and estimation.\\ 
 
 
 \section*{Acknowledgement}
 Dr. George F. Fragulis work is supported by the program Rescom 80159 of the Univ. of Applied Science of Western Macedonia, Hellas.


\end{document}